\numberwithin{equation}{section}
\numberwithin{figure}{section}
\theoremstyle{plain}
\newtheorem{thm}{Theorem}[section]
\theoremstyle{plain}
\theoremstyle{remark}
\newtheorem{rem}[thm]{Remark}
\theoremstyle{plain}
\theoremstyle{plain}
\theoremstyle{plain}
\newtheorem{lem}[thm]{Lemma}
\theoremstyle{definition}
\theoremstyle{definition}
\theoremstyle{definition}
\newcommand{\tr}{\textnormal{Tr}}
\newcommand{\E}{\mathcal{E}}
\newcommand{\h}{\mathcal{H}}
\renewcommand{\k}{\mathcal{K}}
\newcommand{\bh}{\mathcal{B(H)}}
\newcommand{\bk}{\mathcal{B(K)}}
\newcommand{\R}{\mathcal{R}}
\newcommand{\un}{\textbf{1}}
\begin{document}
	
	\title[DPI]{Equality conditions of Data Processing Inequality for $\alpha$-$z$ R\'enyi relative entropies}
	\author{Haonan Zhang}
	
	\address{}
	
	\thanks{Email: haonan.zhang@ist.ac.at}
	
	\subjclass[2010]{47N50 $\cdot$ 15A24 $\cdot$ 81P17 $\cdot$ 81P45 $\cdot$ 94A17 $\cdot$ 81P47}
	
	\keywords{$\alpha$-$z$ R\'enyi relative entropies $\cdot$ Data processing inequality $\cdot$ Equality condition $\cdot$ Recovery map $\cdot$ Matrix equation}
	
	\maketitle
	\begin{abstract}
	The $\alpha$-$z$ R\'enyi relative entropies are a two-parameter family of R\'enyi relative entropies that are quantum generalizations of the classical $\alpha$-R\'enyi relative entropies. In \cite{zhang20CFL} we decided the full range of $(\alpha,z)$ for which the Data Processing Inequality (DPI) is valid. In this paper we give algebraic conditions for the equality in DPI. For the full range of parameters $(\alpha,z)$, we give necessary conditions and sufficient conditions. For most parameters we give equivalent conditions. This generalizes and strengthens the results of Leditzky, Rouz{\'e} and Datta in \cite{LRD17DPI}. 
	\end{abstract}
	
	\section{Introduction}
	The Data Processing Inequality (DPI) plays a fundamental role in the quantum information theory. It states that for the quantum relative entropy (usually known as \emph{Umegaki relative entropy} \cite{Umegaki62}) defined by 
	\begin{equation*}
	D(\rho||\sigma):=\tr[\rho(\log \rho-\log\sigma)],
	\end{equation*}
	we have 
	\begin{equation}\label{ineq:DPI for D}
	D(\E(\rho)||\E(\sigma))\leq D(\rho||\sigma).
	\end{equation}
	Here and in what follows $\rho$ and $\sigma$ are always two arbitrary faithful quantum states, and $\E$ is always a quantum channel. This inequality suggests that after the operation of a quantum channel, it becomes much harder to distinguish two quantum states. 
	
	DPI has been studied for various generalizations of Umegaki relative entropy $D$. Usually this is equivalent to the joint convexity/concavity of certain trace functionals, which has become an active topic since Lieb's pioneering work \cite{Lieb73WYD} resolving the conjecture of Wigner, Yanase and Dyson \cite{WY63}. In this paper, the quantum relative entropies that we are concerned with are the so-called \emph{$\alpha$-$z$ R\'enyi relative entropies} $D_{\alpha,z}$, first introduced by Audenaert and Datta \cite{AD15alpha-z}:
	\begin{equation*}
	D_{\alpha,z}(\rho||\sigma):=\frac{1}{\alpha-1}\log\tr(\sigma^{\frac{1-\alpha}{2z}}\rho^{\frac{\alpha}{z}}\sigma^{\frac{1-\alpha}{2z}})^{z},~~\alpha\in(-\infty,1)\cup(1,\infty),~~z>0.
	\end{equation*}
	In general the definition of $D_{\alpha,z}(\rho||\sigma)$ can be extended to quantum states $\rho$ and $\sigma$ such that $\text{supp}(\rho)\subset \text{supp}(\sigma)$, where $\text{supp}(x)$ denotes the support of $x$; see \cite{AD15alpha-z}. In this paper for simplicity we always assume that they are faithful. The family of $\alpha$-$z$ R\'enyi relative entropies $D_{\alpha,z}$ is a quantum generalization of the classical $\alpha$-R\'enyi relative entropies \cite{Renyi61}. It unifies two other important quantum analogues of $\alpha$-R\'enyi relative entropies
	\begin{equation*}
	D_{\alpha}(\rho||\sigma):=\frac{1}{\alpha-1}\log\tr(\rho^{\alpha}\sigma^{1-\alpha}),
	\end{equation*}
	and the so-called \emph{sandwiched R\'enyi relative entropies} \cite{MDSFT13sandwich,WWY14sandwich}
	\begin{equation*}
\widetilde{D}_{\alpha}(\rho||\sigma):=\frac{1}{\alpha-1}\log\tr(\sigma^{\frac{1-\alpha}{2\alpha}}\rho\sigma^{\frac{1-\alpha}{2\alpha}})^{\alpha},
	\end{equation*}
	by taking $z=1$ and $z=\alpha$, respectively. Note that both of $D_{\alpha}$ and $\widetilde{D}_{\alpha}$ admit the Umegaki relative entropy $D$ as a limit case when $\alpha\to 1$. 
	
	In \cite{zhang20CFL} Zhang identified all the pairs $(\alpha,z)$ for which DPI for $\alpha$-$z$ R\'enyi relative entropy $D_{\alpha,z}$ (the following \eqref{ineq:DPI for alpha-z}) is valid.
	\begin{thm}\label{thm:DPI of alpha-z}\cite[Theorem 1.2]{zhang20CFL}
		The $\alpha$-$z$ R\'enyi relative entropy $D_{\alpha,z}$ satisfies the Data Processing Inequality 
		\begin{equation}\label{ineq:DPI for alpha-z}
		D_{\alpha,z}(\E(\rho)||\E(\sigma))\leq D_{\alpha,z}(\rho||\sigma),
		\end{equation}
		where $\rho,\sigma$ are any faithful quantum states over $\h$, $\E:\bh \to \bh$ is any quantum channel, and $\h$ is any finite dimensional Hilbert space, if and only if one of the following holds
		\begin{enumerate}
			\item $0<\alpha<1$ and $z\geq\max\{\alpha,1-\alpha\}$;
			\item $1<\alpha\leq 2$ and $\frac{\alpha}{2}\leq z\leq\alpha$;
			\item $2\leq\alpha<\infty$ and $\alpha-1\leq z\leq\alpha$.
		\end{enumerate}
	\end{thm}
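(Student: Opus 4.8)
The plan is to first turn the Data Processing Inequality into a convexity/concavity statement for a trace functional, and then appeal to the sharp classification of when that functional is jointly concave or convex. Write $Q_{\alpha,z}(\rho\|\sigma):=\tr(\sigma^{\frac{1-\alpha}{2z}}\rho^{\frac{\alpha}{z}}\sigma^{\frac{1-\alpha}{2z}})^{z}$, so that $D_{\alpha,z}=\frac{1}{\alpha-1}\log Q_{\alpha,z}$. Since $\log$ is increasing and $\frac{1}{\alpha-1}$ is negative for $0<\alpha<1$ and positive for $\alpha>1$, inequality \eqref{ineq:DPI for alpha-z} for all channels $\E$ is equivalent to $Q_{\alpha,z}$ being monotone \emph{non-decreasing} under all channels when $0<\alpha<1$, and monotone \emph{non-increasing} when $\alpha>1$. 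Put $p=\frac{\alpha}{z}$, $q=\frac{1-\alpha}{z}$, $s=z$, so that $Q_{\alpha,z}(\rho\|\sigma)=\Psi_{p,q,s}(\rho,\sigma):=\tr(\sigma^{q/2}\rho^{p}\sigma^{q/2})^{s}$ with $s(p+q)=1$. This homogeneity makes $\Psi_{p,q,s}$ invariant under $(\rho,\sigma)\mapsto(V\rho V^{*},V\sigma V^{*})$ for an isometry $V$ and under tensoring both arguments with a maximally mixed state, and additive over direct sums. Combining a Stinespring dilation $\E(\cdot)=\tr_{2}(V\cdot V^{*})$ with the realization of the partial trace $\tr_{2}$ as an average over a unitary group, one gets that joint concavity of $\Psi_{p,q,s}$ implies monotone non-decrease under every channel; conversely, testing monotonicity on $\E=\tr_{2}$ applied to block-diagonal inputs (and using the homogeneity to absorb the weights) recovers midpoint, hence full, joint concavity — and the same with ``concave/non-decreasing'' replaced by ``convex/non-increasing''. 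So the theorem reduces to: for $s=\frac{1}{p+q}$, decide when $\Psi_{p,q,s}$ is jointly concave (the case $p,q>0$, i.e. $0<\alpha<1$) and when it is jointly convex (the case $p>0>q$, $p+q>0$, i.e. $\alpha>1$).

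Next I would invoke the sharp classification of joint concavity/convexity of $\Psi_{p,q,s}$ on the surface $s=\frac1{p+q}$: it is jointly concave exactly when $0\le p\le1$ and $0\le q\le1$, and jointly convex exactly when $1\le p\le2$ and $-1\le q<0$. Translating through $p=\frac{\alpha}{z}$, $q=\frac{1-\alpha}{z}$: the concavity conditions become $z\ge\alpha$ and $z\ge1-\alpha$ (for $0<\alpha<1$ one always has $p,q>0$), which is exactly case (1); the convexity conditions become $\frac\alpha2\le z\le\alpha$ and $z\ge\alpha-1$, i.e. $\max\{\frac\alpha2,\alpha-1\}\le z\le\alpha$, which splits into case (2) for $1<\alpha\le2$ and case (3) for $\alpha\ge2$ — and for $\alpha>1$ one always has $p+q=\frac1z>0$, so no further cases occur. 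The positive half of the classification has two ingredients. The joint concavity of $\Psi_{p,q,1/(p+q)}$ for $0\le p,q\le1$ (covering case (1)) is due to Carlen, Frank and Lieb, obtained from Lieb's concavity theorem together with the concavity of $X\mapsto\tr X^{s}$ for $0<s\le1$, via a variational/interpolation argument. The joint convexity of $\Psi_{p,q,1/(p+q)}$ for $1\le p\le2$, $-1\le q<0$ (covering cases (2) and (3)) is the Carlen--Frank--Lieb conjecture; I would establish it through integral representations reducing the problem to operator convexity of $t\mapsto t^{p}$ on $[1,2]$ and of $t\mapsto t^{q}$ on $(0,\infty)$, combined with convexity of $X\mapsto\tr X^{s}$.

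For the negative half — failure of \eqref{ineq:DPI for alpha-z} whenever $(\alpha,z)$ lies outside the stated region — rather than building explicit channels I would, using the equivalence above, directly violate joint concavity (resp. convexity) of $\Psi_{p,q,s}$. The tool is a second-order perturbation: expand $\Psi_{p,q,s}(\rho_{0}+\varepsilon R,\,\sigma_{0}+\varepsilon S)$ in $\varepsilon$ around a well-chosen base point (typically $(\un,\un)$ or a commuting diagonal pair) and a well-chosen direction $(R,S)$, and exhibit an $\varepsilon^{2}$-coefficient of the sign forbidden by the required inequality. The curves $z=\alpha$, $z=1-\alpha$, $z=\frac\alpha2$, $z=\alpha-1$ are precisely where this Hessian degenerates, which is what makes the classification sharp.

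\textbf{The main obstacle} is the positive direction in the convex regime $1\le p\le2$, $-1\le q<0$, i.e. cases (2) and (3): unlike the concave range it is not a consequence of Lieb's theorem and requires genuinely new input — this is the content of the Carlen--Frank--Lieb conjecture and the core of \cite{zhang20CFL}. The necessity direction, by contrast, is essentially mechanical once one has chosen test perturbations that are neither tangent to the boundary nor make $\rho$ and $\sigma$ commute in a way that kills the quadratic term.
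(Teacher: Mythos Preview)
This theorem is not proved in the present paper --- it is quoted from \cite{zhang20CFL} and used as input. So there is no ``paper's own proof'' to compare against; the closest thing is that the \emph{sufficiency} direction of the DPI is implicitly re-derived inside the proof of Theorem~\ref{thm:main thm} in Section~\ref{sect:proof}, and that re-derivation follows exactly the skeleton you describe: Stinespring dilation plus Heisenberg--Weyl averaging reduces the channel inequality to joint concavity/convexity of $\Psi_{p,q}$ (equations \eqref{eq:stinespring}--\eqref{eq:equivalent form of equality in DPI}), and the latter is obtained from the variational formulas of Lemma~\ref{lem:variation} combined with Lieb--Ando (Lemma~\ref{lem:Lieb-Ando}). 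Your reduction and your parameter translation into the three cases are correct.

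Two comments on your sketch of the harder steps. First, for the convex range $1\le p\le 2$, $-1\le q<0$ you propose ``integral representations reducing the problem to operator convexity of $t\mapsto t^{p}$ and $t\mapsto t^{q}$''. That is not how \cite{zhang20CFL} (and this paper) proceed: the actual mechanism is a two-step variational formula --- write $\Psi_{p,q}$ as a $\max_H$ of $f_H$, then write $f_H$ as a $\min_{K,L}$ of a function to which Lieb--Ando applies directly --- exactly Lemma~\ref{lem:variation} here. Your phrasing is vague enough that it is unclear it would close; since you anyway attribute this step to \cite{zhang20CFL}, the outline survives, but be aware that the variational method is the one that works. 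Second, the necessity direction (failure outside the region) via a Hessian/perturbation test is indeed how \cite{zhang20CFL} and earlier work handle it, and is entirely absent from the present paper.
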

Remark that a quantum channel $\E$ in \cite{zhang20CFL} is meant to be a completely positive trace preserving (CPTP) map from $\bh$ to $\bh$ for some finite-dimensional Hilbert space $\h$. In this paper $\E$ is a quantum channel if it is CPTP from $\bh$ to $\bk$ for some finite-dimensional Hilbert spaces $\h$ and $\k$. The main results in \cite{zhang20CFL}, including the above theorem, are still valid for quantum channels in this more general sense. 

\medskip

    In this paper, we are interested in the equality condition of DPI for $D_{\alpha,z}$ \eqref{ineq:DPI for alpha-z}. Petz \cite{petz86sufficient,petz88sufficiency} proved that the equality in DPI for $D$ \eqref{ineq:DPI for D} is captured for the triple $(\rho,\sigma,\E)$ if and only if there exists a quantum channel $\R$, usually known as the \emph{recovery map}, such that it reverses the action of $\E$ over $\{\rho,\sigma\}$:
    \begin{equation*}
    \R\circ\E(\rho)=\rho\text{ and }\R\circ\E(\sigma)=\sigma.
    \end{equation*}
    The ``if'' part is trivial, by applying DPI \eqref{ineq:DPI for D} again to $(\E(\rho),\E(\sigma),\R)$. The ``only if'' direction is much more difficult and usually yields deeper result than DPI; see for example the work of Carlen and Vershynina \cite{CV18recovery,CV20recovery} on stability of DPI and other related results \cite{FR15stable,JRSWW18stable,SBT17stable,sharma14stable,sutter18stable}. It is a natural question to ask the existence of such recovery maps for other quantum relative entropies. The existence of recovery map is known for $D_\alpha$ with $\alpha\in (0,2]$ \cite{petz86sufficient,petz88sufficiency,HMPB11}, which is the full range of $\alpha$ for which DPI for $D_\alpha$ is valid, and for $\widetilde{D}_\alpha$ with $\alpha\in(\frac{1}{2},1)\cup(1,\infty)$ \cite{jen17,jenvcova18NCLp,HM17reversibility}, which is the full range of $\alpha$ for which DPI for $\widetilde{D}_\alpha$ is valid. A related notion is the \emph{sufficiency} (of channels). We refer to \cite{petz86sufficient,petz88sufficiency,HM17reversibility,LRD17DPI,jen17,jenvcova18NCLp} for more discussions on the sufficiency.
    
    \medskip
    
    In \cite{LRD17DPI} Leditzky, Rouz{\'e} and Datta proved that in DPI for $\widetilde{D}_\alpha$ 
    \begin{equation}\label{ineq:DPI for sandwich}
    \widetilde{D}_{\alpha}(\E(\rho)||\E(\sigma))\leq \widetilde{D}_{\alpha}(\rho||\sigma),
    \end{equation}
    with $\alpha\in[\frac{1}{2},1)\cup(1,\infty)$ (which is the full range of $\alpha$ for \eqref{ineq:DPI for sandwich} to hold), the equality is captured for $(\rho,\sigma,\E)$ if and only if 
    \begin{equation}\label{eq:LDR's main result}
    \sigma^{\gamma}(\sigma^{\gamma}\rho\sigma^{\gamma})^{\alpha-1}\sigma^{\gamma}
    =\E^{\dagger}\left(\E(\sigma)^{\gamma}\left[\E(\sigma)^{\gamma}\E(\rho)\E(\sigma)^{\gamma}\right]^{\alpha-1}\E(\sigma)^{\gamma}\right),
    \end{equation}
    where $\gamma=\frac{1-\alpha}{2\alpha}$ and $\E^\dagger$ is the adjoint of $\E$ with respect to the Hilbert-Schmidt inner product. Note that ``if'' part is obvious, and the difficulty lies in the ``only if'' part. It is not clear whether one can deduce the existence of a recover map for $\widetilde{D}_\alpha$ from this algebraic condition \eqref{eq:LDR's main result} except for $\alpha=2$. We shall explain the case $\alpha=2$ in Remark \ref{rmk of main thm}. 
    
    The main result of this paper is a generalization and strength of Leditzky-Rouz{\'e}-Datta's result. On the one hand, we prove that if the equality of DPI for $D_{\alpha,z}$ is captured for some $(\rho,\sigma,\E)$, then necessarily an algebraic condition \eqref{eq:x and y}, as a generalization of \eqref{eq:LDR's main result}, is valid (see Theorem \ref{thm:main thm} (i)). Remark that when $\alpha\neq z$, this necessary condition might not be sufficient. On the other hand, we give two other algebraic conditions \eqref{eq:equality p} and \eqref{eq:equality q}, which are sufficient for equality of DPI for $D_{\alpha,z}$ (see Theorem \ref{thm:main thm} (ii)). Moreover, for certain parameters (roughly speaking ``non-endpoint" case), these sufficient conditions are also necessary (see Theorem \ref{thm:main thm} (iii) and (iv)). This is even new when $\alpha=z$.

    Our main result is the following

    	\begin{thm}\label{thm:main thm}
    	Let $(\alpha,z)$ be as in Theorem \ref{thm:DPI of alpha-z} and set $p:=\frac{\alpha}{z}$ and $q:=\frac{1-\alpha}{z}$. For two faithful quantum states $\rho,\sigma\in\bh$ and a quantum channel $\E:\bh\to\bk$ put 
    	\begin{equation}\label{eq:x}
    	x:=\sigma^{\frac{q}{2}}(\sigma^{\frac{q}{2}}\rho^{p}\sigma^{\frac{q}{2}})^{-\frac{q}{p+q}}\sigma^{\frac{q}{2}}
    	=\rho^{-\frac{p}{2}}(\rho^{\frac{p}{2}}\sigma^{q}\rho^{\frac{p}{2}})^{\frac{p}{p+q}}\rho^{-\frac{p}{2}},
    	\end{equation}
    	and 
    	\begin{equation}\label{eq:y}
    	\begin{split}
    	y:&=\E(\sigma)^{\frac{q}{2}}\left(\E(\sigma)^{\frac{q}{2}}\E(\rho)^{p}\E(\sigma)^{\frac{q}{2}}\right)^{-\frac{q}{p+q}}\E(\sigma)^{\frac{q}{2}}\\
    	&=\E(\rho)^{-\frac{p}{2}}\left(\E(\rho)^{\frac{p}{2}}\E(\sigma)^{q}\E(\rho)^{\frac{p}{2}}\right)^{\frac{p}{p+q}}\E(\rho)^{-\frac{p}{2}}.
    	\end{split}
    	\end{equation}
    	Consider the following statements
    	\begin{enumerate}
    		\item the inequality in DPI \eqref{ineq:DPI for alpha-z} becomes an equality:
    		\begin{equation}\label{eq:equality in DPI}
    		D_{\alpha,z}(\E(\rho)||\E(\sigma))=D_{\alpha,z}(\rho||\sigma);
    		\end{equation}
    		\item there holds the identity
    		\begin{equation}\label{eq:x and y}
    		x=\E^{\dagger}(y);
    		\end{equation}
    		where $\E^\dagger$ is the adjoint of $\E$ with respect to the Hilbert-Schmidt inner product;
    		\item there holds the identity
    		\begin{equation}\label{eq:equality p}
    		\E \left[\left(x^{\frac{1}{2}}\rho^p x^{\frac{1}{2}}\right)^{\frac{1}{p}}\right]
    		=\left(y^{\frac{1}{2}}\E(\rho)^p y^{\frac{1}{2}}\right)^{\frac{1}{p}};
    		\end{equation}
    		\item there holds the identity
    		\begin{equation}\label{eq:equality q}
    		\E \left[\left(x^{-\frac{1}{2}}\sigma^q x^{-\frac{1}{2}}\right)^{\frac{1}{q}}\right]
    		=\left(y^{-\frac{1}{2}}\E(\sigma)^q y^{-\frac{1}{2}}\right)^{\frac{1}{q}}.
    		\end{equation}    		
    	\end{enumerate}
    Then we have
    \begin{enumerate}[label=(\roman*)]
    	\item both (3) and (4) imply (1);
    	\item (1) $\Rightarrow$ (2), and when $\alpha=z$ (or equivalently $p=1$): (2) $\Rightarrow$ (1);
    	\item if $\alpha\neq z$ (or equivalently $p\neq1$), then (1) $\Rightarrow$ (3);
    	\item if $1-\alpha\neq \pm z$ (or equivalently $q\neq\pm 1$), then (1) $\Rightarrow$ (4).
    \end{enumerate}
    \end{thm}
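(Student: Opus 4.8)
The backbone of the argument is the variational formula for the trace functional $Q_{\alpha,z}(\rho||\sigma):=\tr(\sigma^{q/2}\rho^{p}\sigma^{q/2})^{z}$ that is used in \cite{zhang20CFL} to prove Theorem~\ref{thm:DPI of alpha-z}, together with the elementary identity $Q_{\alpha,z}(\rho||\sigma)=\tr[(x^{1/2}\rho^{p}x^{1/2})^{1/p}]=\tr[(x^{-1/2}\sigma^{q}x^{-1/2})^{1/q}]$, valid for every faithful pair $(\rho,\sigma)$ with $x$ as in \eqref{eq:x}. The identity is a one-line spectral computation: setting $K:=\sigma^{q/2}\rho^{p/2}$ one has $x=\sigma^{q/2}(KK^{*})^{-qz}\sigma^{q/2}$ (recall $z=1/(p+q)$), hence $\rho^{p/2}x\rho^{p/2}=K^{*}(KK^{*})^{-qz}K=(K^{*}K)^{1-qz}=(K^{*}K)^{pz}$, so that taking $1/p$-th powers and traces yields $\tr[(x^{1/2}\rho^{p}x^{1/2})^{1/p}]=\tr[(K^{*}K)^{z}]=\tr(\sigma^{q/2}\rho^{p}\sigma^{q/2})^{z}$; the $\sigma^{q}$-version is symmetric. (The same computation shows that the two expressions for $x$ in \eqref{eq:x}, and for $y$ in \eqref{eq:y}, do coincide.) Throughout, one must bookkeep the sign regimes permitted by Theorem~\ref{thm:DPI of alpha-z} --- $0<\alpha<1$ versus $\alpha>1$, and the signs of $p-1$ and of $q$ --- since they dictate whether the ``$\mathrm{opt}$'' in the variational formula is an infimum or a supremum and whether $t\mapsto t^{1/p}$, $t\mapsto t^{1/q}$ are operator concave or operator convex.

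Claim (i) is the ``if'' direction and follows from the identity above together with trace preservation of $\E$. Applying the identity to $(\E(\rho),\E(\sigma))$ (with $y$ in place of $x$), then \eqref{eq:equality p}, then $\tr\circ\E=\tr$, then the identity once more, gives $Q_{\alpha,z}(\E(\rho)||\E(\sigma))=\tr[(y^{1/2}\E(\rho)^{p}y^{1/2})^{1/p}]=\tr\big(\E[(x^{1/2}\rho^{p}x^{1/2})^{1/p}]\big)=\tr[(x^{1/2}\rho^{p}x^{1/2})^{1/p}]=Q_{\alpha,z}(\rho||\sigma)$; since $D_{\alpha,z}=\tfrac{1}{\alpha-1}\log Q_{\alpha,z}$ is an injective function of $Q_{\alpha,z}$, this is \eqref{eq:equality in DPI}. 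Condition \eqref{eq:equality q} is treated identically via the $\sigma^{q}$-form of the identity.

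For the converse directions I would re-run the DPI proof of \cite{zhang20CFL} and examine its equality cases. Schematically, that proof realises the chain $Q_{\alpha,z}(\E(\rho)||\E(\sigma))=\Phi_{p}(\E(\rho),\E(\sigma),y)\;(\lessgtr)\;\Phi_{p}(\rho,\sigma,\E^{\dagger}(y))\;(\lessgtr)\;Q_{\alpha,z}(\rho||\sigma)$, in which $\Phi_{p}$ is the ``$p$-form'' of the variational functional (built, schematically, from $\tr[(\omega^{1/2}\rho^{p}\omega^{1/2})^{1/p}]$ plus a term controlling the $\sigma$-dependence), the first step is an operator Jensen inequality for the unital map $\E^{\dagger}$ applied to $t\mapsto t^{1/p}$, and the second step holds because $\E^{\dagger}(y)$ is a feasible but a priori non-optimal competitor for $(\rho,\sigma)$, whose optimizer is $x$. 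Assuming \eqref{eq:equality in DPI}, hence $Q_{\alpha,z}(\E(\rho)||\E(\sigma))=Q_{\alpha,z}(\rho||\sigma)$, both steps are equalities. Equality in the second forces $\E^{\dagger}(y)$ to be an optimizer for $(\rho,\sigma)$; by strict concavity/convexity of $\Phi_{p}(\rho,\sigma,\cdot)$ (hence uniqueness of the optimizer) this gives $\E^{\dagger}(y)=x$, which is (2). When $\alpha=z$, i.e. $p=1$, the function $t\mapsto t^{1/p}=t$ is affine, so the first step is automatically an equality and imposes no further constraint; conversely, $\E^{\dagger}(y)=x$ then already collapses the chain and yields \eqref{eq:equality in DPI}. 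This last point is exactly the mechanism of Leditzky--Rouz\'e--Datta \cite{LRD17DPI}, whose result is recovered as the $p=1$ (sandwiched) subfamily. For (iii): when $p\neq1$ the function $t\mapsto t^{1/p}$ is \emph{strictly} operator concave or convex, so equality in the first step, through the equality case of operator Jensen for $\E^{\dagger}$, forces the relevant positive operator to lie in a multiplicative domain of $\E$, whence $\E[(x^{1/2}\rho^{p}x^{1/2})^{1/p}]=(y^{1/2}\E(\rho)^{p}y^{1/2})^{1/p}$, i.e. \eqref{eq:equality p}. Replacing $\Phi_{p}$ by the ``$q$-form'' $\Phi_{q}$ and $t\mapsto t^{1/p}$ by $t\mapsto t^{1/q}$ produces \eqref{eq:equality q}; the hypothesis $q\neq\pm1$ is precisely what guarantees that the function intervening in that Jensen step is not affine, so that the step is rigid.

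I expect the main obstacle to be bookkeeping rather than a single hard idea: one must transcribe the variational formula of \cite{zhang20CFL} and pin down its optimizer as $x$ (and $y$) in each of the sign regimes of Theorem~\ref{thm:DPI of alpha-z} --- infimum versus supremum, $t^{1/p}$ and $t^{1/q}$ operator concave versus convex, and the correct placement of $\E$ versus $\E^{\dagger}$ --- and then extract the \emph{precise} identities \eqref{eq:equality p}--\eqref{eq:equality q} from the equality case of operator Jensen, rather than some weaker commutation relation; in addition one must verify that the optimizer of $\Phi_{p}(\rho,\sigma,\cdot)$ (and of $\Phi_{q}(\rho,\sigma,\cdot)$) is genuinely unique, since this uniqueness is what makes the implication (1)$\Rightarrow$(2) go through.
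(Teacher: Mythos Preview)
Your treatment of (i), and of $(2)\Rightarrow(1)$ when $p=1$, is correct and matches the paper: once one has the identity $\Psi_{p,q}(\rho,\sigma)=\tr[(x^{1/2}\rho^{p}x^{1/2})^{1/p}]=\tr[(x^{-1/2}\sigma^{q}x^{-1/2})^{1/q}]$ (and its $(\E(\rho),\E(\sigma),y)$ analogue), trace preservation and the duality $\tr(y\E(\rho))=\tr(\E^{\dagger}(y)\rho)$ do the job.

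The gap is in your plan for $(1)\Rightarrow(2),(3),(4)$. The chain
\[
\Psi_{p,q}(\E(\rho),\E(\sigma))=\Phi_{p}(\E(\rho),\E(\sigma),y)\ \lessgtr\ \Phi_{p}(\rho,\sigma,\E^{\dagger}(y))\ \lessgtr\ \Psi_{p,q}(\rho,\sigma)
\]
is \emph{not} the argument of \cite{zhang20CFL}, and the middle step has no justification when $p\neq 1$: the variational functional contains $\tr\bigl(H^{1/2}\E(\rho)^{p}H^{1/2}\bigr)^{1/p}$, and operator Jensen for the unital map $\E^{\dagger}$ compares $f(\E^{\dagger}(A))$ with $\E^{\dagger}(f(A))$, which relates $\E^{\dagger}(y)$-expressions to $y$-expressions but says nothing about passing from $\E(\rho)^{p}$ to $\rho^{p}$. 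At $p=1$ the term degenerates to $\tr(H\E(\rho))=\tr(\E^{\dagger}(H)\rho)$ and your chain collapses --- which is exactly why the Leditzky--Rouz\'e--Datta mechanism works in the sandwiched subfamily and why it does not extend verbatim. Consequently your proposed extraction of \eqref{eq:equality p} from ``equality in operator Jensen plus multiplicative domain'' never gets off the ground.

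The paper proceeds differently. It follows \cite{zhang20CFL} in reducing DPI to joint concavity/convexity of $\Psi_{p,q}$ via Stinespring: one writes $\tfrac{\un}{d}\otimes\E(\rho)=\tfrac{1}{d^{2}}\sum_{j}V_{j}$ and $\tfrac{\un}{d}\otimes\E(\sigma)=\tfrac{1}{d^{2}}\sum_{j}W_{j}$ as averages of unitary conjugates of $\rho\otimes\delta$, $\sigma\otimes\delta$ using Heisenberg--Weyl unitaries, so that equality in DPI becomes equality in $\Psi_{p,q}\bigl(\tfrac{1}{d^{2}}\sum V_{j},\tfrac{1}{d^{2}}\sum W_{j}\bigr)\lessgtr\tfrac{1}{d^{2}}\sum\Psi_{p,q}(V_{j},W_{j})$. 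The equality analysis then runs through a \emph{two-layer} variational formula: an outer one $\Psi_{p,q}(A,B)=\mathrm{opt}_{H}f_{H}(A,B)$ with a \emph{unique} optimizer (this uniqueness is proved by differentiating and solving a pair of matrix equations $A^{\alpha_{i}}=KB^{\beta_{i}}K^{*}$, a nontrivial lemma of independent interest), and an inner one $f_{H}(A,B)=\mathrm{opt}_{K,L}g_{H}(A,B,K,L)$ with $g_{H}$ jointly concave/convex by Lieb--Ando and again with unique optimizers. A general lemma on equality in ``$\mathrm{opt}$ of a convex/concave function'' then forces, at the outer layer, that all the optimizers $H_{j}$ coincide with $H_{0}=\un\otimes y$; undoing the dilation gives $x=\E^{\dagger}(y)$. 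At the inner layer it forces $K_{0}=\tfrac{1}{d^{2}}\sum K_{j}$ and $L_{0}=\tfrac{1}{d^{2}}\sum L_{j}$, which after undoing the dilation are exactly \eqref{eq:equality p} and \eqref{eq:equality q}. The exclusions $p=1$ and $q=\pm 1$ arise because at those values the corresponding inner variational problem is trivial (non-unique optimizer), so no constraint on $K$ or $L$ can be extracted --- not because a Jensen function becomes affine.
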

    \begin{rem}\label{rmk of main thm}
    	Consider the map $\mathcal{R}_{\alpha,z}:\bh^{++}\to\bh^{++}$ determined by
    	\begin{equation*}
    	\sigma^{\frac{q}{2}}\left(\sigma^{\frac{q}{2}}\mathcal{R}_{\alpha,z}(\omega)^{p}\sigma^{\frac{q}{2}}\right)^{-\frac{q}{p+q}}\sigma^{\frac{q}{2}}
    	=\E^\dagger\left[\E(\sigma)^{\frac{q}{2}}\left(\E(\sigma)^{\frac{q}{2}}\omega^{p}\E(\sigma)^{\frac{q}{2}}\right)^{-\frac{q}{p+q}}\E(\sigma)^{\frac{q}{2}}\right].
    	\end{equation*}
    	Clearly $\mathcal{R}_{\alpha,z}\left(\E(\sigma)\right)=\sigma$, since $\E^\dagger(\un_{\k})=\un_{\h}$. When \eqref{eq:x and y} holds, we also have $\mathcal{R}_{\alpha,z}\left(\E(\rho)\right)=\rho$. In particular, if $\alpha=z=2$, or equivalently $(p,q)=(1,-\frac{1}{2})$, then 
    	\begin{equation}\label{eq:recovery map 2,2}
    	\mathcal{R}_{2,2}(\omega)=\sigma^{\frac{1}{2}}\E^{\dagger}\left(\E(\sigma)^{-\frac{1}{2}}\omega\E(\sigma)^{-\frac{1}{2}}\right)\sigma^{\frac{1}{2}},
    	\end{equation}
    	is a quantum channel and thus a recovery map. 
    \end{rem}

\begin{rem}
	In particular, when $\alpha\neq z$ and $1-\alpha\neq \pm z$, or equivalently $p\neq 1$ and $q\neq \pm1$, we have $(1)\Leftrightarrow(3)\Leftrightarrow(4)$. Here the equivalence of \eqref{eq:equality p} in (3) and \eqref{eq:equality q} in (4) is obtained via (1). It will be interesting to find a direct proof for $(3)\Leftrightarrow(4)$. 
\end{rem}

This paper is organized as follows. In Section \ref{sect: lemmas} we give some lemmas for the proof of main result Theorem \ref{thm:main thm}. Some of them are of independent interest. In Section \ref{sect:proof} we give the proof of Theorem \ref{thm:main thm}.
\medskip

\textbf{Notations}. In this paper $\mathbb{R}$ (resp. $\mathbb{N}$ and $\mathbb{C}$) denotes the set of all real numbers (resp. natural numbers and complex numbers). 

We use $\h,\h'$ and $\k$ to denote finite-dimensional (complex) Hilbert spaces. For a finite-dimensional Hilbert space $\h$ we use $\un_{\h}$ to denote the identity operator over $\h$. We denote by $\bh$ the set of all bounded linear operators over $\h$, that is, all complex matrices of size $\dim \h\times \dim\h$. We denote by $\bh^+$ (resp. $\bh^{++}$) to denote the subfamily of positive (resp. positive invertible) elements of $\bh$, that is, all positive semi-define (resp. positive definite) matrices of size $\dim \h\times \dim\h$. By $\bh^\times$ we mean the subcollection of invertible elements in $\bh$. We use the usual trace $\tr$ on a matrix algebra. By a faithful quantum state we mean an invertible positive operator over $\h$ (or a positive definite matrix of size $\dim\h \times\dim\h$) with unit trace. For an operator $T$ on a matrix algebra, we denote by $T^\dagger$ its adjoint with respect to the Hilbert-Schmidt inner product. For any $K\in\bh$, $|K|=(K^*K)^{\frac{1}{2}}$ denotes its modulus.

 By a quantum channel we mean a completely positive trace preserving map $\E:\bh\to\bk$ for some finite-dimensional Hilbert spaces $\h$ and $\k$. Recall that a map $\E:\bh\to\bk$ is completely positive if $\E\otimes \un_{\mathbb{C}^n}:\bh\otimes \mathcal{B}(\mathbb{C}^n)\to\bk\otimes \mathcal{B}(\mathbb{C}^n)$ is positive for all $n\ge 1$.
    
\medskip

\emph{Note added}. After completion of this paper, the author has been informed that in a recent preprint \cite{CV20saturating}, Anna Vershynina and Sarah Chehade have obtained necessary and sufficient conditions on a partial range of $(\alpha,z)$. Their conditions appear to be different from ours. It will be interesting to compare these conditions. 

\medskip

\section{Some lemmas}
\label{sect: lemmas}
	\begin{lem}\label{lem:matrix equation}
		Let $\alpha_i,\beta_i,i=1,2$ be real numbers such that $\alpha_1\beta_2\ne \alpha_2\beta_1$. Let $\h$ be a finite-dimensional Hilbert space. Then for $K\in \bh^{\times}$, the pair $(A,B)\in \bh^{++}\times \bh^{++}$ that solves the equations
		\begin{equation}\label{eq:matrix equation}
		A^{\alpha_1}=KB^{\beta_1}K^* \text{ and }A^{\alpha_2}=KB^{\beta_2}K^*,
		\end{equation}
		is unique and takes the form
		\begin{equation}\label{eq:solution}
		A=|K^*|^{\frac{2(\beta_1-\beta_2)}{\alpha_2\beta_1-\alpha_1\beta_2}}\text{ and }B=|K|^{\frac{2(\alpha_1-\alpha_2)}{\alpha_2\beta_1-\alpha_1\beta_2}}.
		\end{equation}
	\end{lem}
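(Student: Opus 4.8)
The plan is to normalize $K$ via its polar decomposition, check that the claimed pair solves the system, and then prove uniqueness by a simultaneous diagonalization. Since $K\in\bh^{\times}$, write $K=UP$ with $U$ unitary and $P:=|K|=(K^*K)^{1/2}\in\bh^{++}$; then $K^*=PU^*$ and $|K^*|=(KK^*)^{1/2}=UPU^*$. Putting $C:=U^*AU\in\bh^{++}$, the system \eqref{eq:matrix equation} is equivalent to
\begin{equation*}
C^{\alpha_i}=PB^{\beta_i}P,\qquad i=1,2,
\end{equation*}
with $A=UCU^*$, and the claimed solution \eqref{eq:solution} becomes $C=P^{r}$, $B=P^{s}$ where $r=\frac{2(\beta_1-\beta_2)}{\alpha_2\beta_1-\alpha_1\beta_2}$ and $s=\frac{2(\alpha_1-\alpha_2)}{\alpha_2\beta_1-\alpha_1\beta_2}$. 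Substituting these turns the reduced equations into the scalar identities $r\alpha_i-s\beta_i=2$, $i=1,2$, which hold because this $2\times2$ linear system has determinant $\alpha_2\beta_1-\alpha_1\beta_2\neq0$ and $r,s$ are precisely its solution; this settles existence, so the whole lemma reduces to uniqueness for the reduced system.

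For uniqueness, let $(C,B)\in\bh^{++}\times\bh^{++}$ solve the reduced system. If $\alpha_1=\alpha_2$, the hypothesis forces $\alpha_1\neq0$ and $\beta_1\neq\beta_2$, and $PB^{\beta_1}P=C^{\alpha_1}=PB^{\beta_2}P$ gives $B^{\beta_1}=B^{\beta_2}$, hence $B=\un$ and $C=P^{2/\alpha_1}$, matching the formula. Now assume $\alpha_1\neq\alpha_2$. Rewriting the equations as $B^{\beta_i}=P^{-1}C^{\alpha_i}P^{-1}$ and using that $B^{\beta_1}$ and $B^{\beta_2}$, being functions of $B$, commute, a short rearrangement yields $[C^{\alpha_2-\alpha_1},P^{-2}]=0$; since $\alpha_2-\alpha_1\neq0$, the operator $C$ is a function of $C^{\alpha_2-\alpha_1}$, so $[C,P]=0$. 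As $\beta_1,\beta_2$ cannot both vanish, after relabelling the two equations we may assume $\beta_1\neq0$. Fix a common orthonormal eigenbasis $\{e_j\}$ of $C$ and $P$, with $Ce_j=c_je_j$, $Pe_j=p_je_j$ and $c_j,p_j>0$; then $B^{\beta_i}e_j=c_j^{\alpha_i}p_j^{-2}e_j$, so $e_j$ is an eigenvector of $B$ as well (using $\beta_1\neq0$), say $Be_j=b_je_j$, and the eigenvalues obey $\alpha_i\log c_j-\beta_i\log b_j=2\log p_j$ for $i=1,2$. This $2\times2$ system has determinant $\alpha_2\beta_1-\alpha_1\beta_2\neq0$, so $c_j$ and $b_j$ are uniquely determined by $p_j$; hence $C$ and $B$ are unique, and by the existence step they must equal $P^{r}$ and $P^{s}$.

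The step I expect to be the crux is the commutation $[C,P]=0$, i.e.\ showing that any solution is automatically diagonal with respect to $|K|$; once this is established the system decouples into scalar equations, and the hypothesis $\alpha_1\beta_2\neq\alpha_2\beta_1$ is exactly the nondegeneracy making them uniquely solvable. The only other places needing care are the degenerate case $\alpha_1=\alpha_2$, where the argument shortcuts, and the possibility that one of the $\beta_i$ vanishes, which is absorbed by relabelling.
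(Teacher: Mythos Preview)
Your proof is correct and follows a genuinely different route from the paper's. Both arguments hinge on the same key observation---that any solution must commute with $|K|$---but they set up and exploit this fact differently. The paper works directly with $K$ (no polar decomposition), verifies existence via the identity $|K^*|^{2\alpha}=K|K|^{2(\alpha-1)}K^*$, and for uniqueness shows $[|K|,B^{\beta_1-\beta_2}]=0$ and $[|K^*|,A^{\alpha_1-\alpha_2}]=0$; from there it proceeds by an induction giving $A^{n\alpha_i}=K(B^{n\beta_i}|K|^{2n-2})K^*$ for all $n\in\mathbb N$, extends to real exponents by Weierstrass approximation, and then plugs in suitable exponents to force $B=|K|^{s}$. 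You instead first reduce via $K=UP$ to the symmetric system $C^{\alpha_i}=PB^{\beta_i}P$, obtain $[C,P]=0$ from the commutation of $B^{\beta_1}$ with $B^{\beta_2}$, and then simultaneously diagonalize to collapse everything to the $2\times2$ linear system in $\log c_j,\log b_j$ with determinant $\alpha_2\beta_1-\alpha_1\beta_2\neq0$. Your approach avoids the induction/approximation step and makes the role of the hypothesis $\alpha_1\beta_2\neq\alpha_2\beta_1$ more transparent (it is literally the Cramer determinant); the paper's approach has the minor advantage of never leaving the original variables $(A,B,K)$. Either way, the crux you flagged---deriving commutation with $|K|$---is exactly the same pivot point in both proofs.
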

	
	\begin{proof}
		It is easy to see that the pair $(A,B)$ in \eqref{eq:solution} really solves \eqref{eq:matrix equation}. This is a consequence of the following identities:
		\begin{equation}\label{eq:identity of left and right modulus}
		|K^*|^{2\alpha}=(KK^*)^{\alpha}=K(K^*K)^{\alpha-1}K^*=K|K|^{2(\alpha-1)}K^*,~~\alpha\in\mathbb{R}.
		\end{equation}
		In fact, applying \eqref{eq:identity of left and right modulus} to $\alpha=\frac{\alpha_1(\beta_1-\beta_2)}{\alpha_2\beta_1-\alpha_1\beta_2}$, we get
		\begin{equation*}
		\begin{split}
		A^{\alpha_1}
		=K|K|^{\frac{2\alpha_1(\beta_1-\beta_2)}{\alpha_2\beta_1-\alpha_1\beta_2}-2}K^*
		=K|K|^{\frac{2(\alpha_1-\alpha_2)\beta_1}{\alpha_2\beta_1-\alpha_1\beta_2}}K^*
		=KB^{\beta_1}K^*,
		\end{split}
		\end{equation*}
		and applying \eqref{eq:identity of left and right modulus} to $\alpha=\frac{\alpha_2(\beta_1-\beta_2)}{\alpha_2\beta_1-\alpha_1\beta_2}$, we obtain
		\begin{equation*}
		\begin{split}
		A^{\alpha_2}
		=K|K|^{\frac{2\alpha_2(\beta_1-\beta_2)}{\alpha_2\beta_1-\alpha_1\beta_2}-2}K^*
		=K|K|^{\frac{2(\alpha_1-\alpha_2)\beta_2}{\alpha_2\beta_1-\alpha_1\beta_2}}K^*
		=KB^{\beta_2}K^*.
		\end{split}
		\end{equation*}
		For the proof of \eqref{eq:identity of left and right modulus}, observe first that it is obvious for all $\alpha\in\mathbb{N}$. Then one can prove that it is valid for all $\alpha\in\mathbb{R}$ using functional calculus and Weierstrass approximation theorem.
		
		It remains to show that \eqref{eq:solution} is the only solution of \eqref{eq:matrix equation}. Note first that this is trivial when $\alpha_1=\alpha_2$ or $\beta_1=\beta_2$. In fact, if $\alpha_1=\alpha_2$, then one has
		\begin{equation*}
		KB^{\beta_1}K^* =A^{\alpha_1}=A^{\alpha_2}=KB^{\beta_2}K^*.
		\end{equation*}
		Since $\beta_1\neq\beta_2$, we have $B=\un_{\h}$. Thus $A=(KK^*)^{\frac{1}{\alpha_1}}=|K^*|^{\frac{2}{\alpha_1}}$ and this finishes the proof for $\alpha_1=\alpha_2$. The case $\beta_1=\beta_2$ can be proved similarly.
		
		Now assume that $\alpha_1\ne\alpha_2$ and $\beta_1\ne\beta_2$. By \eqref{eq:matrix equation},
		\[
		KB^{\beta_1-\beta_2}K^{-1}=A^{\alpha_1-\alpha_2}=(K^*)^{-1}B^{\beta_1-\beta_2}K^*.
		\]
		It follows that 
		\[
		K^*KB^{\beta_1-\beta_2}=B^{\beta_1-\beta_2}K^*K,
		\]
		 and 
		 \[
		 KK^*A^{\alpha_1-\alpha_2}=KB^{\beta_1-\beta_2}K^*=A^{\alpha_1-\alpha_2}KK^*.
		\]
		Since $\alpha_1\ne\alpha_2$, we obtain that $A^{\gamma}$ commutes with $|K^*|$ for any $\gamma\in\mathbb{R}$. Similarly, $B^{\gamma}$ commutes with $|K|$ for any $\gamma\in\mathbb{R}$ because $\beta_1\ne\beta_2$. Thus one has 
		\[
		A^{2\alpha_1}=KB^{\beta_1}K^*KB^{\beta_1}K^*=K(B^{2\beta_1}|K|^2)K^*,
		\]
		and 
		\[
		A^{2\alpha_2}=KB^{\beta_2}K^*KB^{\beta_2}K^*=K(B^{2\beta_2}|K|^2)K^*.
		\]
		Then by induction one can show that for all integers $n\ge 1$
		\[
		A^{n\alpha_1}=K(B^{n\beta_1}|K|^{2n-2})K^*\text{ and }
		A^{n\alpha_2}=K(B^{n\beta_2}|K|^{2n-2})K^*.
		\]
		By functional calculus and Weierstrass approximation theorem, for all $\gamma_1,\gamma_2\in\mathbb{R}$:
		\[
		A^{\gamma_1\alpha_1}=K(B^{\gamma_1\beta_1}|K|^{2\gamma_1-2})K^*\text{ and }
		A^{\gamma_2\alpha_2}=K(B^{\gamma_2\beta_2}|K|^{2\gamma_2-2})K^*.
		\]
		Choosing $\gamma_1=\alpha_2$ and $\gamma_2=\alpha_1$, we have
		\[
		K(B^{\alpha_2\beta_1}|K|^{2\alpha_2-2})K^*=A^{\alpha_1\alpha_2}=K(B^{\alpha_1\beta_2}|K|^{2\alpha_1-2})K^*.
		\]
		This, together with the assumption $\alpha_1\beta_2\ne \alpha_2\beta_1$, yields that 
	
		\[
		B=|K|^{\frac{2(\alpha_1-\alpha_2)}{\alpha_2\beta_1-\alpha_1\beta_2}}.
		\]
		Similarly we obtain
		\[
		A=|K^{-1}|^{\frac{2(\beta_1-\beta_2)}{\beta_2\alpha_1-\beta_1\alpha_2}}=|K^*|^{\frac{2(\beta_1-\beta_2)}{\alpha_2\beta_1-\alpha_1\beta_2}}.
		\]
		Hence the only solution of \eqref{eq:matrix equation} is \eqref{eq:solution} and the proof is finished.
	\end{proof}

\begin{lem}\label{lem:convexity under min/max}
	Let $X,Y$ be two convex sets and $f$ be any real function on $X\times Y$. For $n\ge 2$, take any $(x_j)_{1\le j\le n}\subset X$ and any $\lambda_j>0,1\le j\le n$ such that $\sum_{j=1}^{n}\lambda_j=1$. Set $x_0:=\sum_{j=1}^{n}\lambda_j x_j$.
	\begin{enumerate}
		\item Suppose that for any $x\in X$, $\max_{y\in Y} f(x,y)$ exists and is attained by a unique element $y_x\in Y$. If for any $y\in Y$, the function $x\mapsto f(x,y)$ is convex, then the function $g(x):=\max_{y\in Y} f(x,y)$ is convex:
		\begin{equation}\label{eq:equality holds 1}
		g\left( x_0\right)\le\sum_{j=1}^{n}\lambda_jg(x_j).
		\end{equation} 
		 Moreover, if the equality is captured, we have
		 \begin{equation}
		 f(x_0,y_{x_0})=\sum_{j=1}^{n}\lambda_jf(x_j,y_{x_0}),
		 \end{equation}
		 and 
		 \begin{equation}\label{eq:y_x0 = y_xj}
		 y_{x_0}=y_{x_j}\text{ for }1\le j\le n.
		 \end{equation}

		\item Suppose that for any $x\in X$, $\min_{y\in Y} f(x,y)$ exists and is attained by a unique element $y_x\in Y$. If $(x,y)\mapsto f(x,y)$ is jointly convex, then the function $g(x):=\min_{y\in Y} f(x,y)$ is convex:
		\begin{equation}\label{eq:equality holds 2}
		g\left( x_0\right)\le \sum_{j=1}^{n}\lambda_jg(x_j).
		\end{equation} 
		Moreover, if the equality is captured, then we have
		$$y_{x_0}=\sum_{j=1}^{n}\lambda_j y_{x_j}.$$
	\end{enumerate} 
	The similar results hold for concave functions when replacing $\max$ (resp. $\min$) with $\min$ (resp. $\max$).
\end{lem}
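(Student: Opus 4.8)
The plan is to prove both parts by the standard ``sandwich'' argument for extremal (max/min) representations of convex functions, and then read off the equality conditions from the observation that, when Jensen's inequality is saturated, every inequality in the sandwich chain must itself be an equality.

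For part (1), I would begin with the identity $g(x_0)=f(x_0,y_{x_0})$, which is just the definition of the maximizer at $x_0$. Applying convexity of the single function $x\mapsto f(x,y_{x_0})$ (note that only \emph{separate} convexity in the first variable is used here) gives $f(x_0,y_{x_0})\le\sum_{j=1}^n\lambda_j f(x_j,y_{x_0})$, and then bounding each summand by $f(x_j,y_{x_0})\le\max_{y\in Y}f(x_j,y)=g(x_j)$ yields \eqref{eq:equality holds 1}. Now suppose equality is captured, so that the first member $f(x_0,y_{x_0})$ and the last member $\sum_j\lambda_j g(x_j)$ of this chain coincide; since the chain is monotone, each intermediate inequality is forced to be an equality. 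The first one becomes $f(x_0,y_{x_0})=\sum_{j=1}^n\lambda_j f(x_j,y_{x_0})$, which is the asserted identity. The second one reads $\sum_j\lambda_j\bigl(g(x_j)-f(x_j,y_{x_0})\bigr)=0$ with all summands nonnegative and all $\lambda_j>0$, hence $f(x_j,y_{x_0})=g(x_j)=\max_{y\in Y}f(x_j,y)$ for every $j$; by the assumed \emph{uniqueness} of the maximizer this forces $y_{x_0}=y_{x_j}$, which is \eqref{eq:y_x0 = y_xj}.

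For part (2) the roles of the variables are interchanged: instead of freezing $y$ I would test the minimum at the particular point $\bar y:=\sum_{j=1}^n\lambda_j y_{x_j}$, which lies in $Y$ by convexity. This gives $g(x_0)\le f\bigl(x_0,\bar y\bigr)=f\bigl(\sum_j\lambda_j x_j,\sum_j\lambda_j y_{x_j}\bigr)$, and now \emph{joint} convexity of $f$ applied to the points $(x_j,y_{x_j})$ gives $f\bigl(\sum_j\lambda_j x_j,\sum_j\lambda_j y_{x_j}\bigr)\le\sum_{j=1}^n\lambda_j f(x_j,y_{x_j})=\sum_{j=1}^n\lambda_j g(x_j)$, which proves \eqref{eq:equality holds 2}. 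If equality is captured, then in particular the first inequality is an equality, so $f(x_0,\bar y)=\min_{y\in Y}f(x_0,y)=g(x_0)$; by uniqueness of the minimizer at $x_0$ we conclude $\bar y=y_{x_0}$, i.e.\ $y_{x_0}=\sum_j\lambda_j y_{x_j}$. The concave statements follow at once by applying parts (1) and (2) to $-f$, which turns $\max$ into $\min$ and convexity into concavity. I do not expect any genuine obstacle in this lemma; the only two points that require care are distinguishing where separate convexity suffices (part (1)) from where full joint convexity is needed (part (2)), and using the strict positivity of the weights $\lambda_j$ together with uniqueness of the extremizer to upgrade the scalar Jensen equality into the pointwise conclusions about $y_{x_0}$.
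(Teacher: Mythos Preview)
Your proposal is correct and follows essentially the same argument as the paper's own proof: the same sandwich chain for part~(1), the same test point $\bar y=\sum_j\lambda_j y_{x_j}$ for part~(2), and the same use of uniqueness (together with $\lambda_j>0$) to extract the equality conditions. There is nothing to add.
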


\begin{proof}
In fact, the proof of (1) does not require $Y$ to be convex. The convexity of $g$ is trivial:
	\begin{equation*}
	g\left( x_0\right)
	=f(x_0,y_{x_0})
	\le \sum_{j=1}^{n}\lambda_jf(x_j,y_{x_0})
	\le \sum_{j=1}^{n}\lambda_jf(x_j,y_{x_j})
	=\sum_{j=1}^{n}\lambda_jg(x_j),
	\end{equation*}
	where in the first inequality we used the convexity of $f(\cdot,y_{x_0})$, and in the second inequality we used the fact that $\max_{y\in Y} f(x_j,y)$ is attained by $y_{x_j}$. Here we do not need the uniqueness assumption of maximizers. If the equality in \eqref{eq:equality holds 1} is captured, then necessarily we have
	\begin{equation}
	f(x_0,y_{x_0})=\sum_{j=1}^{n}\lambda_jf(x_j,y_{x_0}),
	\end{equation}
	and 
	\begin{equation*}
	f(x_j,y_{x_0})=f(x_j,y_{x_j})\text{ for }1\le j\le n.
	\end{equation*}
	By uniqueness of maximizers, one has $y_{x_0}=y_{x_j}$ for $1\le j\le n$.
	
	(2) The convexity of $g$ is trivial:
	\begin{equation*}
	g\left( x_0\right)=f(x_0,y_{x_0})
	\le f\left(\sum_{j=1}^{n}\lambda_j x_j,\sum_{j=1}^{n}\lambda_j y_{x_j}\right)
	\le \sum_{j=1}^{n}\lambda_jf(x_j,y_{x_j})
	=\sum_{j=1}^{n}\lambda_jg(x_j),
	\end{equation*}
	where in the first inequality we used the fact that $\min_{y\in Y} f(x_0,y)$ is attained by $y_{x_0}$, and in the second inequality we used the joint convexity of $f$. Here we do not need the uniqueness assumption of minimizers. If the equality in \eqref{eq:equality holds 2} is captured, then necessarily we have
	\begin{equation*}
	f(x_0,y_{x_0})=f\left(x_0,\sum_{j=1}^{n}\lambda_j y_{x_j}\right)
	\end{equation*}
	By uniqueness of minimizers, one has $y_{x_0}=\sum_{j=1}^{n}\lambda_j y_{x_j}$.
\end{proof}

The following lemma is a variant of \cite[Theorem 3.3]{zhang20CFL}. It takes the advantage that minimizers/maximizers in the variational formulas are unique. The uniqueness will be crucial in the proof of Theorem \ref{thm:main thm}, as indicated in the previous lemma.

\begin{lem}\label{lem:variation}
	Let $r_i>0,i=0,1,2$. Suppose that $\frac{1}{r_0}=\frac{1}{r_1}+\frac{1}{r_2}$. Then for any $X,Y\in\bh^{\times}$ we have 
	\begin{equation}\label{eq:variational min}
	\tr |XY|^{2r_0}=\min_{H\in\bh^{++}}\left\{\frac{r_0}{r_1}\tr(XHX^*)^{r_1}+\frac{r_0}{r_2}\tr(Y^*H^{-1}Y)^{r_2}\right\},
	\end{equation}
	and 
	\begin{equation}\label{eq:variational max}
	\tr |XY|^{2r_1}=\max_{H\in\bh^{++}}\left\{\frac{r_1}{r_0}\tr(XHX^*)^{r_0}-\frac{r_1}{r_2}\tr\left(Y^{-1}H(Y^{-1})^{*}\right)^{r_2}\right\}.
	\end{equation}
	Moreover, the minimizer in \eqref{eq:variational min} is unique and takes the form 
	\begin{equation}\label{eq:minimizer}
		\underline{H}=X^{-1}|Y^{*}X^{*}|^{\frac{2r_0}{r_1}}(X^{-1})^{*}=Y|XY|^{-\frac{2r_0}{r_2}}Y^*,
	\end{equation}
	and similarly the maximizer in \eqref{eq:variational max} is unique and takes the form
	\begin{equation}\label{eq:maximizer}
	\overline{H}=X^{-1}|Y^{*}X^{*}|^{\frac{2r_1}{r_0}}(X^{-1})^{*}
	=Y|XY|^{\frac{2r_1}{r_2}}Y^{*}.
	\end{equation}
	In particular, for $A\in\bh^{++},K\in \bh^{\times}$ and  $0<s<1<t<\infty$, we have 
	\begin{equation}\label{eq:variational one variable max}
	\tr (K^*A^s K)^{\frac{1}{s}}=\max_{Z\in\bh^{++}}\left\{\frac{1}{s}\tr(K^*A^sKZ^{1-s})-\frac{1-s}{s}\tr Z\right\},
	\end{equation}
	with the unique maximizer being $\overline{Z}=(K^*A^s K)^{\frac{1}{s}}$, and 
	\begin{equation}\label{eq:variational one variable min}
	\tr (K^*A^t K)^{\frac{1}{t}}=\min_{Z\in\bh^{++}}\left\{\frac{1}{t}\tr(K^*A^tKZ^{1-t})+\frac{t-1}{t}\tr Z\right\},
	\end{equation}
	with the unique minimizer being $\underline{Z}=(K^*A^t K)^{\frac{1}{t}}$.
\end{lem}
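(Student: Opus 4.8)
The plan is to derive the two-variable identities \eqref{eq:variational min} and \eqref{eq:variational max} from the Schatten--H\"older inequality combined with a weighted arithmetic--geometric mean inequality, and to obtain the one-variable identities \eqref{eq:variational one variable max}--\eqref{eq:variational one variable min} from the standard and reverse trace Young inequalities; in every case the sharp equality condition is what delivers uniqueness of the optimizer. (Alternatively the identities themselves may be quoted from \cite[Theorem 3.3]{zhang20CFL}, so that only the explicit optimizers and their uniqueness need proof; the sketch below is essentially self-contained.)

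First I treat \eqref{eq:variational min}. The substitution $G=XHX^{*}$ rewrites the right-hand side as $\min_{G\in\bh^{++}}\bigl\{\tfrac{r_{0}}{r_{1}}\tr G^{r_{1}}+\tfrac{r_{0}}{r_{2}}\tr(M^{*}G^{-1}M)^{r_{2}}\bigr\}$ with $M:=XY$. Factoring $M=G^{1/2}(G^{-1/2}M)$ and applying H\"older for Schatten norms with $\tfrac1{2r_{0}}=\tfrac1{2r_{1}}+\tfrac1{2r_{2}}$ gives
\[
\tr|M|^{2r_{0}}\le\bigl(\tr G^{r_{1}}\bigr)^{r_{0}/r_{1}}\bigl(\tr(M^{*}G^{-1}M)^{r_{2}}\bigr)^{r_{0}/r_{2}}\le\tfrac{r_{0}}{r_{1}}\tr G^{r_{1}}+\tfrac{r_{0}}{r_{2}}\tr(M^{*}G^{-1}M)^{r_{2}},
\]
the last step being weighted AM--GM with weights $r_{0}/r_{1}$ and $r_{0}/r_{2}$ (which sum to $1$). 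For $G=(MM^{*})^{r_{0}/r_{1}}$ the modulus identity \eqref{eq:identity of left and right modulus} gives $M^{*}(MM^{*})^{-r_{0}/r_{1}}M=|M|^{2r_{0}/r_{2}}$, so both summands equal $\tr|M|^{2r_{0}}$ and equality holds throughout; undoing the substitution via $H=X^{-1}G(X^{-1})^{*}$ produces the first form of $\underline H$ in \eqref{eq:minimizer}, and one more use of \eqref{eq:identity of left and right modulus} identifies it with the second form $Y|XY|^{-2r_{0}/r_{2}}Y^{*}$. The identity \eqref{eq:variational max} goes the same way: with $G=XHX^{*}$ one writes $G^{1/2}=\bigl(G^{1/2}(M^{*})^{-1}\bigr)M^{*}$, uses H\"older to bound $\tr G^{r_{0}}$ above by $\bigl(\tr(M^{-1}G(M^{-1})^{*})^{r_{2}}\bigr)^{r_{0}/r_{2}}\bigl(\tr|M|^{2r_{1}}\bigr)^{r_{0}/r_{1}}$, and then invokes the elementary bound $\tfrac1\theta w^{\theta}-w\le\tfrac{r_{1}}{r_{2}}$ for $w>0$ with $\theta=r_{0}/r_{1}\in(0,1)$, which is strict except at $w=1$; the maximizer $\overline G=(MM^{*})^{r_{1}/r_{0}}$ is checked as before and translated back to $\overline H$.

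For uniqueness in \eqref{eq:variational min}, let $H$ be a minimizer. Then equality holds in both inequalities above. Equality in weighted AM--GM forces $\tr(XHX^{*})^{r_{1}}=\tr(Y^{*}H^{-1}Y)^{r_{2}}$, and equality in Schatten--H\"older, in the invertible case, forces $(XYY^{*}X^{*})^{r_{0}}$ to be a positive scalar multiple of $(XHX^{*})^{r_{1}}$; hence $XHX^{*}$, and therefore $H$, is determined up to a positive scalar, i.e. $H=\mu\,X^{-1}(XYY^{*}X^{*})^{r_{0}/r_{1}}(X^{-1})^{*}$ for some $\mu>0$. Computing the two traces with the help of \eqref{eq:identity of left and right modulus} gives $\mu^{r_{1}}\tr|XY|^{2r_{0}}$ and $\mu^{-r_{2}}\tr|XY|^{2r_{0}}$, so the AM--GM equality yields $\mu^{r_{1}+r_{2}}=1$, hence $\mu=1$ and $H=\underline H$. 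The maximizer in \eqref{eq:variational max} is pinned down in the same manner, the scalar now being fixed by the equality case $w=1$. Finally, the one-variable formulas follow by applying, to $W:=K^{*}A^{s}K$ (resp. $W:=K^{*}A^{t}K$), the trace Young inequality $\tr(CD)\le\tfrac1p\tr C^{p}+\tfrac1q\tr D^{q}$ with conjugate pair $(p,q)=(\tfrac1s,\tfrac1{1-s})$ (resp. its reverse version, valid for $p=\tfrac1t\in(0,1)$ with negative conjugate $q=\tfrac1{1-t}$) to $C=W$ and $D=Z^{1-s}$ (resp. $D=Z^{1-t}$); the equality condition $C^{p}=D^{q}$ reads $Z=W^{1/s}$ (resp. $Z=W^{1/t}$), which is both the value-attaining point and the unique optimizer.

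The main obstacle is the equality case of the Schatten--H\"older inequality over the whole range of positive exponents $r_{1},r_{2}$: these may be smaller than $1$, so the relevant trace functionals are neither convex nor concave and no strict-convexity shortcut is available. What is needed is that $\|AB\|_{r}=\|A\|_{p}\|B\|_{q}$ with $A,B\in\bh^{\times}$ forces $|(AB)^{*}|^{r}$ to be proportional to $|A^{*}|^{p}$ (equivalently $|A|^{p}$ proportional to $|B^{*}|^{q}$); this can be extracted from the log-majorization $s(AB)\prec_{\log}s(A)s(B)$ of singular values together with the equality case of the scalar H\"older inequality. Combined with the equality case of the (reverse) trace Young inequality, and with careful bookkeeping of left versus right moduli so that \eqref{eq:identity of left and right modulus} collapses the various expressions, this is the technical heart of the argument.
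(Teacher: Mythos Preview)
Your strategy is sound but differs from the paper's on the uniqueness step. The paper cites \cite[Theorem~3.3]{zhang20CFL} for the variational identities themselves (exactly as you note one may), verifies directly that $\underline H,\overline H$ attain the extrema, and then proves uniqueness by a first-order argument: any optimizer on the open set $\bh^{++}$ must be a critical point of $\varphi(H):=f_H(X,Y)$; computing $D\varphi(H)=0$ yields a coupled pair of matrix equations of the form $A^{\alpha_i}=KB^{\beta_i}K^{*}$, $i=1,2$, which the auxiliary Lemma~\ref{lem:matrix equation} solves uniquely. The one-variable formulas \eqref{eq:variational one variable max}--\eqref{eq:variational one variable min} are then obtained by specializing $(X,Y)$ (taking $Y=\un_\h$ and relabelling) rather than via an independent trace Young argument. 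Your route replaces the calculus step by the equality conditions in Schatten--H\"older and AM--GM, which is conceptually clean and bypasses Lemma~\ref{lem:matrix equation} entirely.

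The trade-off is that the H\"older equality case you invoke---for arbitrary $r_1,r_2>0$, possibly below $1$---must deliver the \emph{operator-level} proportionality $G^{r_1}\propto(MM^{*})^{r_0}$, not merely equality of singular-value sequences. Your sketch (log-majorization $s(AB)\prec_{\log}s(A)s(B)$ together with scalar H\"older) yields $s_i(G)^{r_1}=c\,s_i(MM^{*})^{r_0}$ but says nothing about the eigenvectors; closing this requires, in addition, the equality analysis of Horn's inequalities at the level of singular vectors, which is a separate and somewhat delicate piece of work that you have not supplied. This is the one place where your outline is genuinely thinner than what the argument demands, and it is precisely the obstacle that the paper's differentiation-plus-Lemma~\ref{lem:matrix equation} approach is designed to sidestep.
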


\begin{rem}
	Note that when $s=1$ or $t=1$, we still have \eqref{eq:variational one variable max} or \eqref{eq:variational one variable min}, respectively. The variational formulas are trivial and certainly maximizers/minimizers are not unique. 
\end{rem}

\begin{proof}[Proof of Lemma \ref{lem:variation}]
	We first check that \eqref{eq:variational one variable max} and \eqref{eq:variational one variable min} follow from \eqref{eq:variational max} and \eqref{eq:variational min}, respectively. In fact, taking $$(r_0,r_1,r_2,X,Y)=\left(1,\frac{1}{s},\frac{1}{1-s},A^{\frac{s}{2}}K,\un_{\h}\right),$$
	in \eqref{eq:variational max}, we get 
	\begin{equation*}
	\begin{split}
	\tr (K^*A^s K)^{\frac{1}{s}}
	&=\max_{H\in\bh^{++}}\left\{\frac{1}{s}\tr(K^*A^sKH)-\frac{1-s}{s}\tr H^{\frac{1}{1-s}}\right\}\\
	&=\max_{Z\in\bh^{++}}\left\{\frac{1}{s}\tr(K^*A^s KZ^{1-s})-\frac{1-s}{s}\tr Z\right\},
	\end{split}
	\end{equation*}
	and the unique maximizer is $\overline{Z}=(K^*A^s K)^{\frac{1}{s}}$.
	Similarly, taking 
	$$(r_0,r_1,r_2,X,Y)=\left(\frac{1}{t},1,\frac{1}{t-1},A^{\frac{s}{2}}K,\un_{\h}\right),$$
	in \eqref{eq:variational min}, we obtain
	\begin{equation*}
	\begin{split}
	\tr (K^*A^t K)^{\frac{1}{t}}
	&=\min_{H\in\bh^{++}}\left\{\frac{1}{t}\tr(K^*A^tKH)+\frac{t-1}{t}\tr H^{\frac{1}{1-t}}\right\}\\
	&=\min_{Z\in\bh^{++}}\left\{\frac{1}{t}\tr(K^*A^tKZ^{1-t})+\frac{t-1}{t}\tr Z\right\},
	\end{split}
	\end{equation*}
	and the unique minimizer is $\underline{Z}=(K^*A^t K)^{\frac{1}{t}}$.
	Now it remains to show \eqref{eq:variational min} and \eqref{eq:variational max}.  By \cite[Theorem 3.3]{zhang20CFL}, we have
	\begin{equation}\label{eq:cited variational min}
	\tr |XY|^{2r_0}=\min_{Z\in\bh^{\times}}\left\{\frac{r_0}{r_1}\tr|XZ|^{2r_1}+\frac{r_0}{r_2}\tr|Z^{-1}Y|^{2r_2}\right\},
	\end{equation}
	and 
	\begin{equation}\label{eq:cited variational max}
	\tr |XY|^{2r_1}=\max_{Z\in\bh^{\times}}\left\{\frac{r_1}{r_0}\tr|XZ|^{2r_0}-\frac{r_1}{r_2}\tr|Y^{-1}Z|^{2r_2}\right\}.
	\end{equation}
	Replacing $Z\in\bh^{\times}$ with $H=ZZ^*\in\bh^{++}$ in \eqref{eq:cited variational min} and \eqref{eq:cited variational max}, one has \eqref{eq:variational min} and \eqref{eq:variational max}, respectively. We refer to \cite[Theorem 3.3]{zhang20CFL} for the proof of variational formulas \eqref{eq:cited variational min} and \eqref{eq:cited variational max}, which is essentially based on H\"older's inequality. We remark that the minimizers in \eqref{eq:cited variational min} (resp. maximizers in \eqref{eq:cited variational max}) are not unique. For example, if $Z$ is a minimizer in \eqref{eq:cited variational min}, then so is $ZU$ for unitary $U$.
	
	It is an easy computation that $\underline{H}$ (resp. $\overline{H}$) is really a minimizer in \eqref{eq:variational min} (resp. a maximizer in \eqref{eq:variational max}). Actually one has
	\begin{equation*}
	 \tr(X\underline{H}X^*)^{r_1}
	 =\tr|Y^{*}X^{*}|^{2r_0}
	 =\tr|XY|^{2r_0}
	 =\tr(Y^*\underline{H}^{-1}Y)^{r_2},
	\end{equation*}
	and 
	\begin{equation*}
	\tr(X\overline{H}X^*)^{r_0}
	=\tr|Y^{*}X^{*}|^{2r_1}
	=\tr|XY|^{2r_1}
	=\tr\left(Y^{-1}\overline{H}(Y^{-1})^*\right)^{r_2}.
	\end{equation*}
	 Then it remains to  prove that $\underline{H}$ (resp. $\overline{H}$) is the only minimizer in \eqref{eq:variational min} (resp. maximizer in \eqref{eq:variational max}).
	
	To see that $\underline{H}$ given in \eqref{eq:minimizer} is the unique minimizer in \eqref{eq:variational min}, we set
	\begin{equation}\label{eq:def of f_H}
		f_{H}(X,Y):=\frac{r_0}{r_1}\tr(XHX^{*})^{r_1}+\frac{r_0}{r_2}\tr(Y^{*}H^{-1}Y)^{r_2}.
	\end{equation}
	Fix $X,Y\in \bh^{\times}$ and put $\varphi(H) :=f_{H}(X,Y)$. Then for any minimizer $H$ of $\varphi$, the differential $D\varphi$ of $\varphi$ must vanish at $H$. In fact, for any self-adjoint $Z\in\bh$, we have $H+tZ\in \bh^{++}$ for $t\in\mathbb{R}$ with  $|t|$ small enough. For such $t>0$:
	\begin{equation*}
	\frac{1}{t}\left(\varphi(H+tZ)-\varphi(H)\right)\ge 0,
	\end{equation*}
	and for such $t<0$:
	\begin{equation*}
	\frac{1}{t}\left(\varphi(H+tZ)-\varphi(H)\right)\le 0.
	\end{equation*}
	Then for any self-adjoint $Z\in\bh$ we have
	\begin{equation*}
	\lim\limits_{t\to 0}\frac{1}{t}\left(\varphi(H+tZ)-\varphi(H)\right)=\tr[D\varphi(H)Z]=0,
	\end{equation*}
	where $D\varphi(H)\in \bh$ is a self-adjoint element given by 
	\[
	D\varphi(H)=r_0\left[X^{*}(XHX^{*})^{r_1-1}X-(Y^{-1})^{*}(Y^{-1}H(Y^{-1})^{*})^{-r_2-1}Y^{-1}\right].
	\]
	Recall that $r_0\neq 0$. Choose $Z=D\varphi(H)$ and we obtain $D\varphi(H)=0$. That is,
	\begin{equation*}
	\begin{split}
	X^{*}(XHX^{*})^{r_1-1}X
	&=(Y^{-1})^{*}(Y^{-1}H(Y^{-1})^{*})^{-r_2-1}Y^{-1}\\
	&=(Y^{-1})^{*}(Y^{*}H^{-1}Y)^{r_2+1}Y^{-1}.
	\end{split}
	\end{equation*}
	
	Set $A:=Y^{*}H^{-1}Y$, $B:=XHX^{*}$ and $K:=Y^{*}X^{*}$. Then we have
	\[
	A=KB^{-1}K^*\text{ and }A^{r_2+1}=KB^{r_1-1}K^*.
	\]
	Applying Lemma \ref{lem:matrix equation} to $(\alpha_1,\alpha_2,\beta_1,\beta_2)=(1,r_2+1,-1,r_1-1)$, it follows that (note that $\alpha_1\beta_2-\alpha_2\beta_1=r_1+r_2\ne 0$)
	\[
	A=|K^*|^{\frac{2r_1}{r_1+r_2}}
	=|XY|^{\frac{2r_0}{r_2}}
	\text{ and }
	B=|K|^{\frac{2r_2}{r_1+r_2}}=|Y^{*}X^{*}|^{\frac{2r_0}{r_1}}.
	\]
	Hence 
	\begin{equation*}
	\begin{split}
	H=&X^{-1}B(X^{*})^{-1}=X^{-1}|Y^{*}X^{*}|^{\frac{2r_0}{r_1}}(X^{*})^{-1}\\
	=&YA^{-1}Y^{*}=Y|XY|^{-\frac{2r_0}{r_2}}Y^{*},
	\end{split}
	\end{equation*}
	which proves the uniqueness of minimizers. 

The proof for the maximizer is similar. Using the above argument, for fixed $X,Y\in\bh^{++}$, any maximizer $H$ in \eqref{eq:variational max} solves
\begin{equation*}
r_1\left[X^{*}(XHX^{*})^{r_0-1}X-(Y^{-1})^{*}(Y^{-1}H(Y^{-1})^{*})^{r_2-1}Y^{-1}\right]=0,
\end{equation*}
where the left hand side is the differential of 
\begin{equation*}
H\mapsto \frac{r_1}{r_0}\tr(XHX^{*})^{r_0}-\frac{r_1}{r_2}\tr(Y^{-1}H(Y^{-1})^{*})^{r_2}.
\end{equation*}
Since $r_1\neq 0$, $H$ satisfies
\begin{equation*}
X^{*}(XHX^{*})^{r_0-1}X=(Y^{-1})^{*}(Y^{-1}H(Y^{-1})^{*})^{r_2-1}Y^{-1}.
\end{equation*}
Set $A:=XHX^{*}$, $B:=Y^{-1}H(Y^{-1})^{*}$ and $K:=XY$. Then we have 
\begin{equation*}
A=KBK^*\text{ and } A^{1-r_0}=KB^{1-r_2}K^*.
\end{equation*}
Applying Lemma \ref{lem:matrix equation} to $(\alpha_1,\alpha_2,\beta_1,\beta_2)=(1,1-r_0,1,1-r_2)$, it follows that (note that $\alpha_1\beta_2-\alpha_2\beta_1=r_0-r_2\ne 0$)
\begin{equation*}
A=|K^*|^{\frac{2r_2}{r_2-r_0}}=|Y^{*}X^{*}|^{\frac{2r_1}{r_0}}\text{ and }B=|K|^{\frac{2r_0}{r_2-r_0}}=|XY|^{\frac{2r_1}{r_2}}.
\end{equation*}
Hence 
\begin{equation*}
\begin{split}
H=&X^{-1}A(X^{*})^{-1}=X^{-1}|Y^{*}X^{*}|^{\frac{2r_1}{r_0}}(X^{*})^{-1}\\
=&YBY^{*}=Y|XY|^{\frac{2r_1}{r_2}}Y^{*},
\end{split}
\end{equation*}
which proves the uniqueness of maximizers.
\end{proof}

For convenience of later use, we collect a classical convexity/concavity result in next lemma. The concavity is due to Lieb \cite{Lieb73WYD}, and the convexity is due to Ando \cite{Ando79}. We refer to \cite{NEE13} for a unifying and simple proof.
\begin{lem}\cite{Lieb73WYD,Ando79}\label{lem:Lieb-Ando}
	For any $K\in \bh^{\times}$, the function
	$$\bh^{++}\times \bh^{++}\ni(A,B)\mapsto \tr (K^*A^p KB^{1-p}),$$ 
	is 
	\begin{enumerate}
		\item jointly concave if $0<p\le1$;
		\item jointly convex if $-1\le p<0$.
	\end{enumerate}
\end{lem}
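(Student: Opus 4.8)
The plan is to recast the trace functional as a quadratic form of a single positive operator built from left and right multiplications, and thereby reduce the assertion to the joint operator concavity (resp.\ convexity) of the weighted operator geometric mean, which I then establish through an integral representation.

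\emph{Reduction.} Equip $\bh$ with the Hilbert--Schmidt inner product $\langle M,N\rangle=\tr(M^*N)$, and for $A,B\in\bh^{++}$ let $L_A$ and $R_B$ denote left and right multiplication, $L_A(M)=AM$, $R_B(M)=MB$, regarded as operators on the Hilbert space $\bh$. These are commuting positive operators with $L_A^p=L_{A^p}$ and $R_B^{1-p}=R_{B^{1-p}}$, and a direct computation gives $\langle K,L_A^pR_B^{1-p}K\rangle=\tr(K^*A^pKB^{1-p})$. Since $A\mapsto L_A$ and $B\mapsto R_B$ are linear and order preserving, and $T\mapsto\langle K,TK\rangle$ is a positive linear functional on the self-adjoint operators on $\bh$, it suffices to prove that the operator-valued map $(A,B)\mapsto L_A^pR_B^{1-p}$ is jointly operator concave for $0<p\le1$ and jointly operator convex for $-1\le p<0$. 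Because $L_A$ and $R_B$ commute, $L_A^pR_B^{1-p}$ is exactly the commuting weighted geometric mean, so I am reduced to joint operator concavity/convexity of the geometric mean $(S,T)\mapsto S\,\#_{1-p}\,T$ (and its perspective) for genuinely noncommuting positive operators $S,T$ on the auxiliary space, afterwards specialized to the commuting pair $(L_A,R_B)$.

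\emph{Concave case.} For $0<p<1$ I will use the scalar identity $a^pb^{1-p}=\frac{\sin p\pi}{\pi}\int_0^\infty t^{-p}(a^{-1}+tb^{-1})^{-1}\,dt$, which by functional calculus for the commuting pair lifts to
\[
L_A^pR_B^{1-p}=\frac{\sin p\pi}{\pi}\int_0^\infty t^{-p}\bigl(L_A^{-1}+tR_B^{-1}\bigr)^{-1}\,dt .
\]
Each integrand $(L_A^{-1}+tR_B^{-1})^{-1}$ is a rescaled parallel sum of $L_A$ and $R_B$, and I will prove that the parallel sum $(S,T)\mapsto(S^{-1}+T^{-1})^{-1}$ is jointly operator concave from the variational identity $\langle\xi,(S^{-1}+T^{-1})^{-1}\xi\rangle=\inf_{\xi=\eta+\zeta}\{\langle\eta,S\eta\rangle+\langle\zeta,T\zeta\rangle\}$, whose right-hand side is, for each fixed $\xi$, an infimum of functions affine in $(S,T)$ and hence concave. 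Joint concavity of the integrand in $(A,B)$ then follows by composing with the linear positive maps $A\mapsto L_A$, $B\mapsto R_B$, and integration against the positive measure $t^{-p}\,dt$ preserves it; the endpoint $p=1$ reduces to the linear functional $A\mapsto\tr(K^*AK)$.

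\emph{Convex case and main obstacle.} For $-1\le p<0$ I will argue through the perspective $P_g(S,T)=T^{1/2}g(T^{-1/2}ST^{-1/2})T^{1/2}$ of $g(t)=t^p$: since $t\mapsto t^p$ is operator convex on $(0,\infty)$ for $p\in[-1,0]$, the perspective is jointly operator convex, and for commuting $S=L_A$, $T=R_B$ it equals $L_A^pR_B^{1-p}$, which transfers to the desired joint convexity. (Equivalently, replacing $K$ by $K^*$ and swapping $A$ and $B$ sends the parameter $p$ to $1-p$, so the range $p\in[-1,0)$ is interchangeable with $p\in(1,2]$, where one invokes operator convexity of $t^p$ on $[1,2]$.) The main obstacle is precisely this joint operator convexity of the perspective of an operator convex function: the concave analogue is clean via the parallel-sum representation above, whereas the convex case is more delicate, and I would establish it either through the operator Jensen inequality of Hansen--Pedersen applied to a suitable block-matrix dilation, or through an integral representation of $t^p$ whose two-variable building blocks are jointly convex. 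A secondary and more routine matter is to verify the functional-calculus manipulations and the convergence of the integral representations for the commuting pair $(L_A,R_B)$, and to pass from midpoint to full joint concavity/convexity by continuity.
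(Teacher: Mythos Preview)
The paper does not actually prove this lemma: it records it as a classical result, attributing the concavity to Lieb and the convexity to Ando, and refers the reader to \cite{NEE13} for a short unifying argument. Your proposal, by contrast, sketches a genuine proof, and the route you outline is correct and essentially standard. The reduction via the commuting pair $(L_A,R_B)$ of left and right multiplication operators on the Hilbert--Schmidt space is exactly the device that underlies the modern treatments; in the concave range you then use the integral representation of the weighted mean as a superposition of (rescaled) parallel sums together with the Anderson--Duffin variational formula, while in the convex range you appeal to joint operator convexity of the perspective of the operator convex function $t\mapsto t^p$, $p\in[-1,0]$, which is precisely Effros' theorem (derivable, as you note, from the Hansen--Pedersen operator Jensen inequality). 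The referenced \cite{NEE13} proof is organised around the perspective/geometric-mean viewpoint throughout, so your convex half matches it closely, whereas your concave half takes the parallel-sum route instead; both are valid and well known. Nothing in your sketch is wrong, and the points you flag as obstacles (the joint convexity of the perspective, and the routine functional-calculus and convergence checks) are indeed the only places requiring care.
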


\section{Proof of main result}
\label{sect:proof}
In this section we prove our main result Theorem \ref{thm:main thm}. The proof is inspired by the arguments in \cite{LRD17DPI}. For convenience let us denote by $\Psi_{p,q}$ the trace functionals inside $\alpha$-$z$ R\'enyi relative entropies $D_{\alpha,z}$:
\begin{equation*}
\Psi_{p,q}(A,B):=\tr |A^{\frac{p}{2}}B^{\frac{q}{2}}|^{\frac{2}{p+q}}
=\tr(B^{\frac{q}{2}}A^{p}B^{\frac{q}{2}})^{\frac{1}{p+q}}
=\tr(A^{\frac{p}{2}}B^{q}A^{\frac{p}{2}})^{\frac{1}{p+q}}.
\end{equation*} 
 Recall that $(p,q)=\left(\frac{\alpha}{z},\frac{1-\alpha}{z}\right)$.
 
    \begin{proof}[Proof of Theorem \ref{thm:main thm}]
    	
    	(i) To show $(3)\Rightarrow (1)$, note that by definitions of $x$ \eqref{eq:x} and $y$ \eqref{eq:y} we have
    	\begin{equation}\label{eq:eq from x 1}
    	\left(\rho^{\frac{p}{2}}x\rho^{\frac{p}{2}}\right)^{\frac{1}{p}}
    	=\left(\rho^{\frac{p}{2}}\sigma^q\rho^{\frac{p}{2}}\right)^{\frac{1}{p+q}},
    	\end{equation}
    	and 
    	\begin{equation}\label{eq:eq from y 1}
    	\left(\E(\rho)^{\frac{p}{2}}y\E(\rho)^{\frac{p}{2}}\right)^{\frac{1}{p}}
    	=\left(\E(\rho)^{\frac{p}{2}}\E(\sigma)^q\E(\rho)^{\frac{p}{2}}\right)^{\frac{1}{p+q}}.
    	\end{equation}
    	These two identities, together with \eqref{eq:equality p} in (2), yield that
    	\begin{equation*}
    	\begin{split}
    	\Psi_{p,q}(\E(\rho),\E(\sigma))
    	=&\tr \left(\E(\rho)^{\frac{p}{2}}\E(\sigma)^q\E(\rho)^{\frac{p}{2}}\right)^{\frac{1}{p+q}}\\
    	\stackrel{\eqref{eq:eq from y 1}}{=}&\tr \left(\E(\rho)^{\frac{p}{2}}y\E(\rho)^{\frac{p}{2}}\right)^{\frac{1}{p}}\\
    	=&\tr \left(y^{\frac{1}{2}}\E(\rho)^{p}y^{\frac{1}{2}}\right)^{\frac{1}{p}}\\
    	\stackrel{\eqref{eq:equality p}}{=}&\tr \left(x^{\frac{1}{2}}\rho^{p}x^{\frac{1}{2}}\right)^{\frac{1}{p}}\\
    	=&\tr \left(\rho^{\frac{p}{2}}x\rho^{\frac{p}{2}}\right)^{\frac{1}{p}}\\
    	\stackrel{\eqref{eq:eq from x 1}}{=}&\tr \left(\rho^{\frac{p}{2}}\sigma^q\rho^{\frac{p}{2}}\right)^{\frac{1}{p+q}}\\
    	=&\Psi_{p,q}(\rho,\sigma),
    	\end{split}
    	\end{equation*}
    	where in the fourth equality we also used the fact that $\E$ is trace-preserving.
    	
    	The implication $(4)\Rightarrow (1)$ is similar. Note that by \eqref{eq:x} and \eqref{eq:y} one has
    	\begin{equation}\label{eq:eq from x 2}
    	\left(\sigma^{\frac{q}{2}} x^{-1}\sigma^{\frac{q}{2}}\right)^{\frac{1}{q}}=\left(\sigma^{\frac{q}{2}}\rho^p \sigma^{\frac{q}{2}} \right)^{\frac{1}{p+q}},
    	\end{equation}
    	and 
    	\begin{equation}\label{eq:eq from y 2}
    	\left(\E(\sigma)^{\frac{q}{2}}y^{-1}\E(\sigma)^{\frac{q}{2}}\right)^{\frac{1}{q}}=\left(\E(\sigma)^{\frac{q}{2}}\E(\rho)^p\E(\sigma)^{\frac{q}{2}}\right)^{\frac{1}{p+q}}.
    	\end{equation}
    	These, together with \eqref{eq:equality q} in (3), imply that
    	\begin{equation*}
    	\begin{split}
    	\Psi_{p,q}(\E(\rho),\E(\sigma))
    	=&\tr \left(\E(\sigma)^{\frac{q}{2}}\E(\rho)^p\E(\sigma)^{\frac{q}{2}}\right)^{\frac{1}{p+q}}\\
    	\stackrel{\eqref{eq:eq from y 2}}{=}&\tr \left(\E(\sigma)^{\frac{q}{2}}y^{-1}\E(\sigma)^{\frac{q}{2}}\right)^{\frac{1}{q}}\\
    	=&\tr \left(y^{-\frac{1}{2}}\E(\sigma)^{q}y^{-\frac{1}{2}}\right)^{\frac{1}{q}}\\
    	\stackrel{\eqref{eq:equality q}}{=}&\tr \left(x^{-\frac{1}{2}}\sigma^{q}x^{-\frac{1}{2}}\right)^{\frac{1}{q}}\\
    	=&\tr \left(\sigma^{\frac{q}{2}} x^{-1}\sigma^{\frac{q}{2}}\right)^{\frac{1}{q}}\\
    	\stackrel{\eqref{eq:eq from x 2}}{=}&\tr \left(\sigma^{\frac{q}{2}}\rho^p \sigma^{\frac{q}{2}} \right)^{\frac{1}{p+q}}\\
    	=&\Psi_{p,q}(\rho,\sigma).
    	\end{split}
    	\end{equation*}
    	Again, in the fourth equality we also used the fact that $\E$ preserves the trace.
    	
    	\smallskip
    	
    	To prove (ii) - (iv), we shall simply investigate the equality condition in the proof of Theorem \ref{thm:DPI of alpha-z} from \cite{zhang20CFL}. For this recall that for each quantum channel $\E:\bh\to \bk$, using Stinespring's Theorem \cite{Stinespring55}, there exist a finite-dimensional Hilbert space $\h'$, a pure state $\delta$ over $\h'\otimes \k$, and a unitary operator $U$ over $\h\otimes \h'\otimes \k$ such that for any quantum state $\omega$ over $\h$
    	\begin{equation}\label{eq:stinespring}
    	\E(\omega)=\tr_{12}U(\omega\otimes \delta)U^*,
    	\end{equation}
    	where $\tr_{12}$ denotes the partial trace over the first two factors $\h\otimes\h'$ of $\h\otimes \h'\otimes \k$.
    	For a detailed proof, see \cite[Theorem 2.5]{wolf12notes}. Put $d:=\dim \h\otimes \h'$. Recall that (\cite[Example 2.1]{wolf12notes}) the discrete Heisenberg-Weyl group over $\h\otimes \h'$ consists of unitaries $U_{k,l},1\le k,l\le d$ over $\h\otimes \h'$ defined by
    	\begin{equation*}
    	U_{k,l}:=\sum_{r=1}^{d}\eta^{rl}| k+r \rangle \langle r| \text{  with  }\eta:=e^{\frac{2\pi i}{d}},
    	\end{equation*}
    	where addition inside the ket is modulo $d$. One can easily check that for any $\rho\in\mathcal{B}(\h\otimes\h')$ with $\tr\rho=1$:
    	\begin{equation}\label{eq:weyl group property}
    	\frac{1}{d^2}\sum_{k,l=1}^{d}U_{k,l} \rho U_{k,l}^*= \frac{\un_{\h\otimes \h'}}{d}.
    	\end{equation}
    	For convenience, let us denote: $\{u_j\}_{j=1}^{d^2}:=\{U_{k,l}\}_{k,l=1}^{d}$. Then combining \eqref{eq:stinespring} and  \eqref{eq:weyl group property}, we get
    	\begin{equation}\label{eq:stinespring after tensor}
    	\frac{\un_{\h\otimes \h'}}{d}\otimes \E(\omega)
    	=\frac{1}{d^2}\sum_{j=1}^{d^2}(u_j\otimes \un_{\k})U(\omega\otimes\delta)U^*(u^*_j\otimes \un_{\k}).
    	\end{equation}
    	In particular, we have
    	\begin{equation}\label{eq:tensor V W}
    	\frac{\un_{\h\otimes \h'}}{d}\otimes \E(\rho)=\frac{1}{d^2}\sum_{j=1}^{d^2}V_j\text{ and }
    	\frac{\un_{\h\otimes \h'}}{d}\otimes \E(\sigma)=\frac{1}{d^2}\sum_{j=1}^{d^2}W_j,
    	\end{equation}
    	where 
    	\begin{equation}\label{eq:defn of V}
    	V_j=( u_j\otimes\un_{\k})U(\rho\otimes\delta)U^*( u^*_j\otimes\un_{\k}),
    	\end{equation}
    	and
    	\begin{equation}\label{eq:defn of W}
    	W_j=(u_j\otimes\un_{\k})U(\sigma\otimes\delta)U^*(u^*_j\otimes\un_{\k}).
    	\end{equation}
    	Note that 
    	\begin{equation}\label{eq:tensor property}
    	\Psi_{p,q}\left(\frac{\un_{\h\otimes \h'}}{d}\otimes\E(\rho),\frac{\un_{\h\otimes \h'}}{d}\otimes\E(\sigma)\right)
    	=\Psi_{p,q}(\E(\rho),\E(\sigma)),
    	\end{equation}
    	and for $1\le j \le d^2$,
    	\begin{equation}\label{eq:unitarily invariant property}
    	\Psi_{p,q}(V_j,W_j)=\Psi_{p,q}(\rho,\sigma).
    	\end{equation}
    	In view of \eqref{eq:tensor V W}, \eqref{eq:tensor property} and \eqref{eq:unitarily invariant property}, the identity \eqref{eq:equality in DPI} in (1) is equivalent to 
    	\begin{equation}\label{eq:equivalent form of equality in DPI}
    	\Psi_{p,q}\left(\frac{1}{d^2}\sum_{j=1}^{d^2}V_j,\frac{1}{d^2}\sum_{j=1}^{d^2}W_j \right)
    	=\frac{1}{d^2}\sum_{j=1}^{d^2}\Psi_{p,q}(V_j,W_j).
    	\end{equation}
    	

    	Recall that $x$ and $y$ are given in \eqref{eq:x} and \eqref{eq:y}, respectively. For $1\le j \le d^2$, put
    	\begin{equation}\label{eq:H_0}
    	H_0:=\un_{\h\otimes\h'}\otimes y,
    	\end{equation}
    	\begin{equation}\label{eq:H_j}
    	H_j:=(u_j\otimes\un_{\k})U\left(x\otimes \un_{\h'\otimes \k}\right)U^*(u_j^*\otimes\un_{\k}),
    	\end{equation}
    
    	\begin{equation}\label{eq:K_0 L_0}
    	\begin{split}
    	(K_0,L_0)
    	:=\left(\frac{\un_{\h\otimes \h'}}{d}\otimes \left(y^{\frac{1}{2}} \E(\rho)^{p}y^{\frac{1}{2}} \right)^{\frac{1}{p}},\frac{\un_{\h\otimes \h'}}{d}\otimes \left(y^{-\frac{1}{2}} \E(\sigma)^{q}y^{-\frac{1}{2}} \right)^{\frac{1}{q}}\right),
    	\end{split}
    	\end{equation}
    	and 
    	\begin{equation}\label{eq:K_j L_j}
    	\begin{split}
    	(K_j,L_j)
    	:=\left(\left(H_0^{\frac{1}{2}} V_j^{p}H_0^{\frac{1}{2}} \right)^{\frac{1}{p}},\left(H_0^{-\frac{1}{2}} W_j^{q}H_0^{-\frac{1}{2}} \right)^{\frac{1}{q}}\right).
    	\end{split}
    	\end{equation}

    	We claim that from (1) we have
    	\begin{enumerate}
    		\item [(a)] for any $1\leq j\leq d^2,$
    		\begin{equation}\label{eq: H0=Hj}
    		H_0=H_j;
    		\end{equation}
    		\item [(b)] if $\alpha \neq z$, or equivalently $p\neq 1$, then
    		\begin{equation}\label{eq:K0 Kj}
    		K_0=\frac{1}{d^2}\sum_{j=1}^{d^2}K_j;
    		\end{equation}
    		\item [(c)] if $1-\alpha \neq \pm z$, or equivalently $q\neq \pm 1$, then
    		\begin{equation}\label{eq:L0 Lj}
    		L_0=\frac{1}{d^2}\sum_{j=1}^{d^2}L_j.
    		\end{equation}
    	\end{enumerate}

    	Then the desired results (ii) - (iv) will follow from (a) - (c). 
    	We first use (a) to prove (ii), then use (a) and (b) to prove (iii), and finally use (a) and (c) to prove (iv). The claimed (a) - (c) will be shown later.
    	 
    	 In view of \eqref{eq:H_0} and \eqref{eq:H_j}, \eqref{eq: H0=Hj} is nothing but
    	\begin{equation*}
    	\un_{\h\otimes\h'}\otimes y=(u_j\otimes\un_{\k})U\left(x\otimes \un_{\h'\otimes \k}\right)U^*(u_j^*\otimes\un_{\k}),
    	\end{equation*}
    	for $1\leq j\leq d$. Since each $u_j$ is unitary, we have
    	\begin{equation*}
    	\un_{\h\otimes\h'}\otimes y=U(x\otimes \un_{\h'\otimes \k})U^*.
    	\end{equation*}
    	It follows that 
    	\[
    	U^*(\un_{\h\otimes\h'}\otimes y) U(\un_{\h}\otimes\delta)=x\otimes\delta.
    	\]
    	Taking the partial trace over the last two factors $\h'\otimes \k$ of $\h\otimes \h'\otimes \k$, we obtain
    	\begin{equation}\label{eq:before the end of proof}
    	\tr_{23}\left[U^*(\un_{\h\otimes\h'}\otimes y) U(\un_{\h}\otimes\delta)\right]=x.
    	\end{equation}
    	Note that by \eqref{eq:stinespring}, the adjoint $\E^{\dagger}$ of $\E$ is given by 
    	\begin{equation}\label{eq:adjoint of E}
    	\E^{\dagger}(\cdot)=	\tr_{23}\left[U^*(\un_{\h\otimes\h'}\otimes \cdot) U(\un_{\h}\otimes\delta)\right].
    	\end{equation}
    	So we have proved
    	\[
    \E^{\dagger}(y)=x,
    	\]
    	which finishes the proof of $(1)\Rightarrow (2)$. When $\alpha=z$, the implication of $(2)\Rightarrow (1)$ follows immediately from \eqref{eq:x and y} and the definition of $\E^\dagger.$ Hence (ii) is proved.
    	    	    	
    	Now we prove (iii) from \eqref{eq: H0=Hj} in (a) and \eqref{eq:K0 Kj} in (b). Note first that 
    		\begin{equation*}
    	\begin{split}
    	K_j
    	\stackrel{\eqref{eq:K_j L_j}}{=}&\left(H_0^{\frac{1}{2}} V_j^{p}H_0^{\frac{1}{2}} \right)^{\frac{1}{p}}\\
    	\stackrel{ \eqref{eq: H0=Hj}}{=}&\left(H_j^{\frac{1}{2}} V_j^{p}H_j^{\frac{1}{2}} \right)^{\frac{1}{p}}\\
    	=&(u_j\otimes\un_{\k})U\left(\left(x^{\frac{1}{2}}\rho^p x^{\frac{1}{2}}\right)^{\frac{1}{p}}\otimes \delta\right)U^*(u_j^*\otimes\un_{\k}),
    	\end{split}
    	\end{equation*}
    	where the last equality follows from the definitions of $V_j$ \eqref{eq:defn of V} and $H_j$ \eqref{eq:H_j}. Plugging this and \eqref{eq:K_0 L_0}  into \eqref{eq:K0 Kj}, and using \eqref{eq:stinespring after tensor}, one has
    	\begin{equation*}
    	\begin{split}
    	\frac{\un_{\h\otimes \h'}}{d}\otimes \left(y^{\frac{1}{2}} \E(\rho)^{p}y^{\frac{1}{2}} \right)^{\frac{1}{p}}
    	=&\frac{1}{d^2}\sum_{j=1}^{d^2}(u_j\otimes\un_{\k})U\left(\left(x^{\frac{1}{2}}\rho^p x^{\frac{1}{2}}\right)^{\frac{1}{p}}\otimes \delta\right)U^*(u_j^*\otimes\un_{\k})\\
    	\stackrel{\eqref{eq:stinespring after tensor}}{=}&\frac{\un_{\h\otimes \h'}}{d}\otimes \E\left[\left(x^{\frac{1}{2}}\rho^p x^{\frac{1}{2}}\right)^{\frac{1}{p}}\right].
    	\end{split}
    	\end{equation*}
    	From this we infer that 
    	\begin{equation*}
    	 \left(y^{\frac{1}{2}} \E(\rho)^{p}y^{\frac{1}{2}} \right)^{\frac{1}{p}}
    	=\E\left[\left(x^{\frac{1}{2}}\rho^p x^{\frac{1}{2}}\right)^{\frac{1}{p}}\right],
    	\end{equation*}
    	which is nothing but \eqref{eq:equality p} in (3). So $(1)\Rightarrow(3)$ and this proves (iii). Using \eqref{eq: H0=Hj} in (a) and \eqref{eq:L0 Lj} in (c), one can prove (iv) analogously.

    	\smallskip
 
    	Now it remains to prove our claim: (a) - (c). For this we set
    	\begin{equation}
    	\begin{split}
    	f_H(A,B):&=\frac{p}{p+q}\tr (A^{\frac{p}{2}}H A^{\frac{p}{2}})^{\frac{1}{p}}+\frac{q}{p+q}\tr (B^{\frac{q}{2}}H^{-1} B^{\frac{q}{2}})^{\frac{1}{q}}\\
    	&=\frac{p}{p+q}\tr (H^{\frac{1}{2}}A^{p}H^{\frac{1}{2}})^{\frac{1}{p}}+\frac{q}{p+q}\tr (H^{-\frac{1}{2}}B^{q} H^{-\frac{1}{2}})^{\frac{1}{q}}.
    	\end{split}
    	\end{equation}
    	Note that for $(\alpha,z)$ in Theorem \ref{thm:DPI of alpha-z}, for which DPI is valid, we have either 
    	\begin{equation}\label{case 1:0<p,q<1}
    	0<p,q\le 1,
    	\end{equation}
    	or 
    	\begin{equation}\label{case 2:-1<q<0 and 1<p<2}
    	 1\le p\le 2,-1\le q<0 \text{  and  }(p,q)\neq (1,-1).
    	\end{equation}

        \textbf{Case 1}: $(p,q)$ satisfies \eqref{case 1:0<p,q<1}. For $A,B\in \bh^{++}$, apply \eqref{eq:variational min} in Lemma \ref{lem:variation} to 
        $$(r_0,r_1,r_2,X,Y)=\left(\frac{1}{p+q},\frac{1}{p},\frac{1}{q},A^{\frac{p}{2}},B^{\frac{q}{2}}\right),$$
        and we get
        \begin{equation}\label{eq:min form of psi p,q}
        \Psi_{p,q}(A,B)=\min_{H\in\bh^{++}} f_H(A,B),
        \end{equation}
        with the unique minimizer being
        \begin{equation}
        \underline{H}
        =A^{-\frac{p}{2}}\left(A^{\frac{p}{2}}B^{q}A^{\frac{p}{2}}\right)^{\frac{p}{p+q}}A^{-\frac{p}{2}}
        =B^{\frac{q}{2}}\left(B^{\frac{q}{2}}A^{p}B^{\frac{q}{2}}\right)^{-\frac{q}{p+q}}B^{\frac{q}{2}}.
        \end{equation}
        In particular, for 
        $$(A,B)=\left(\frac{\un_{\h\otimes \h'}}{d}\otimes\E(\rho),\frac{\un_{\h\otimes \h'}}{d}\otimes\E(\sigma)\right)$$
        the associated unique minimizer is $H_0$ given in \eqref{eq:H_0}, and for $(A,B)=\left(V_j,W_j\right)$ the associated unique minimizer is $H_j$ given in \eqref{eq:H_j}.
        
        If $0<p,q< 1$, then we have by \eqref{eq:variational one variable max} in Lemma \ref{lem:variation} that 
        	\begin{equation}\label{eq:max form of f H}
        	\begin{split}
        	f_H(A,B)
        	=&\frac{p}{p+q}\max_{K\in \bh^{++}}\left\{\frac{1}{p}\tr\left(H^{\frac{1}{2}} A^p H^{\frac{1}{2}}K^{1-p}\right)-\frac{1-p}{p}\tr K \right\}\\
        	&+\frac{q}{p+q}\max_{L\in \bh^{++}}\left\{\frac{1}{q}\tr\left(H^{-\frac{1}{2}} B^q H^{-\frac{1}{2}}L^{1-q}\right)-\frac{1-q}{q}\tr L \right\}\\
        	=&\max_{K,L\in \bh^{++}}\left\{\frac{1}{p+q}\tr \left(H^{\frac{1}{2}} A^{p}H^{\frac{1}{2}} K^{1-p}\right)-\frac{1-p}{p+q}\tr K\right.\\
        	&\left.+\frac{1}{p+q}\tr \left(H^{-\frac{1}{2}} B^{q}H^{-\frac{1}{2}} L^{1-q}\right)-\frac{1-q}{p+q}\tr L\right\}\\
        	=:&\max_{K,L\in \bh^{++}}g_H(A,B,K,L),
        	\end{split}
        \end{equation}
        with the unique maximizer being 
        \begin{equation}
        (\overline{K},\overline{L})=\left(\left(H^{\frac{1}{2}} A^{p}H^{\frac{1}{2}} \right)^{\frac{1}{p}},\left(H^{-\frac{1}{2}} B^{q}H^{-\frac{1}{2}} \right)^{\frac{1}{q}}\right).
        \end{equation}
        In particular, for 
        $$(H,A,B)=\left(H_0,\frac{\un_{\h\otimes \h'}}{d}\otimes\E(\rho),\frac{\un_{\h\otimes \h'}}{d}\otimes\E(\sigma)\right)$$
         the associated unique maximizer is $(K_0,L_0)$ given in \eqref{eq:K_0 L_0}, and for $(H,A,B)=\left(H_0,V_j,W_j\right)$ the associated unique maximizer is $(K_j,L_j)$ given in \eqref{eq:K_j L_j}.
        
        Since $0<p,q< 1$, by Lieb's concavity theorem (Lemma \ref{lem:Lieb-Ando} (1)), $g_{H}$ is jointly concave for any $H\in \bh^{++}$. Then from Lemma \ref{lem:variation}, \eqref{eq:max form of f H} and \eqref{eq:min form of psi p,q}, both $f_H$ and $\Psi_{p,q}$ are jointly concave. By \eqref{eq:equivalent form of equality in DPI}, which is equivalent to \eqref{eq:equality in DPI} in (1), and Lemma \ref{lem:convexity under min/max} (1) we have  
    	\begin{equation*}
    	H_0=H_j,~~1\leq j\leq d^2,
    	\end{equation*}
    	which proves (a), and 
    	\begin{equation}\label{eq:f_H0 equality captured}
    	f_{H_0}\left(\frac{\un_{\h\otimes \h'}}{d}\otimes\E(\rho),\frac{\un_{\h\otimes \h'}}{d}\otimes\E(\sigma)\right)
    	=\frac{1}{d^2}\sum_{j=1}^{d^2}f_{H_0}\left(V_j,W_j\right).
    	\end{equation}
    	By \eqref{eq:f_H0 equality captured} and Lemma \ref{lem:convexity under min/max} (2) we have 
    	\begin{equation}
    	(K_0,L_0)=\left(\frac{1}{d^2}\sum_{j=1}^{d^2}K_j,\frac{1}{d^2}\sum_{j=1}^{d^2}L_j\right),
    	\end{equation}
    	which proves (b) and (c).
    	
    	If $0<p<1$ and $q= 1$, then $g_{H}=g_{H}(A,B,K,*)$ is independent of $L$. The above argument still applies to $H$ and $K$, thus in this case one can still prove (a) and (b). Similarly, if $p= 1$ and $0<q<1$, then $g_{H}=g_{H}(A,B,*,L)$ is independent of $K$. In this case one can still prove (a) and (c), since the above argument works well for $H$ and $L$.
    	
    	 \textbf{Case 2}: $(p,q)$ satisfies \eqref{case 2:-1<q<0 and 1<p<2}. The proof is similar to that of \textbf{Case 1}. For $A,B\in \bh^{++}$, apply \eqref{eq:variational max} in Lemma \ref{lem:variation} to 
    	 $$(r_0,r_1,r_2,X,Y)=\left(\frac{1}{p},\frac{1}{p+q},\frac{1}{-q},A^{\frac{p}{2}},B^{\frac{q}{2}}\right),$$
    	and we get
    	\begin{equation}\label{eq:max form of psi p,q}
    	\Psi_{p,q}(A,B)=\max_{H\in\bh^{++}} f_H(A,B),
    	\end{equation}
    	with the unique maximizer being
    	\begin{equation}
    	\overline{H}
    	=A^{-\frac{p}{2}}\left(A^{\frac{p}{2}}B^{q}A^{\frac{p}{2}}\right)^{\frac{p}{p+q}}A^{-\frac{p}{2}}
    	=B^{\frac{q}{2}}\left(B^{\frac{q}{2}}A^{p}B^{\frac{q}{2}}\right)^{-\frac{q}{p+q}}B^{\frac{q}{2}}.
    	\end{equation}
    	In particular, for 
    	$$(A,B)=\left(\frac{\un_{\h\otimes \h'}}{d}\otimes\E(\rho),\frac{\un_{\h\otimes \h'}}{d}\otimes\E(\sigma)\right)$$ 
    	the associated unique maximizer is $H_0$ given in \eqref{eq:H_0}, and for $(A,B)=\left(V_j,W_j\right)$ the associated unique maximizer is $H_j$ given in \eqref{eq:H_j}.
    	
    	If $1< p\le 2$ and $0<-q< 1$, we have by \eqref{eq:variational one variable max} and \eqref{eq:variational one variable min} in Lemma \ref{lem:variation} that 
    	\begin{equation}\label{eq:min form of f H}
    		\begin{split}
    	f_H(A,B)
    	=&\frac{p}{p+q}\min_{K\in \bh^{++}}\left\{\frac{1}{p}\tr\left(H^{\frac{1}{2}} A^p H^{\frac{1}{2}}K^{1-p}\right)+\frac{p-1}{p}\tr K \right\}\\
    	&-\frac{-q}{p+q}\max_{L\in \bh^{++}}\left\{\frac{1}{-q}\tr\left(H^{\frac{1}{2}} B^{-q} H^{\frac{1}{2}}L^{1+q}\right)-\frac{1+q}{-q}\tr L \right\}\\
    	=&\min_{K,L\in \bh^{++}}\left\{\frac{1}{p+q}\tr \left(H^{\frac{1}{2}} A^{p}H^{\frac{1}{2}} K^{1-p}\right)+\frac{p-1}{p+q}\tr K\right.\\
    	&\left.-\frac{1}{p+q}\tr \left(H^{\frac{1}{2}} B^{-q}H^{\frac{1}{2}} L^{1+q}\right)+\frac{1+q}{p+q}\tr L\right\}\\
    	=:&\min_{K,L\in \bh^{++}}h_H(A,B,K,L),
    	\end{split}
    	\end{equation}
    	with the unique minimizer being 
    	\begin{equation}
    	(\underline{K},\underline{L})=\left(\left(H^{\frac{1}{2}} A^{p}H^{\frac{1}{2}} \right)^{\frac{1}{p}},\left(H^{\frac{1}{2}} B^{-q}H^{\frac{1}{2}} \right)^{\frac{1}{-q}}\right).
    	\end{equation}
    	In particular, for 
    	$$(H,A,B)=\left(H_0,\frac{\un_{\h\otimes \h'}}{d}\otimes\E(\rho),\frac{\un_{\h\otimes \h'}}{d}\otimes\E(\sigma)\right)$$ 
    	the associated unique minimizer is $(K_0,L_0)$ given in \eqref{eq:K_0 L_0}, and for $(H,A,B)=\left(H_0,V_j,W_j\right)$ the associated unique minimizer is $(K_j,L_j)$ given in \eqref{eq:K_j L_j}.
    	
    	Since $1< p\le 2$ and $0<-q< 1$, by Lieb's concavity theorem (Lemma \ref{lem:Lieb-Ando} (1)) and Ando's convexity theorem (Lemma \ref{lem:Lieb-Ando} (2)), $h_{H}$ is jointly convex for any $H\in \bh^{++}$. Then from Lemma \ref{lem:variation}, \eqref{eq:max form of f H} and \eqref{eq:min form of psi p,q}, both $f_H$ and $\Psi_{p,q}$ are jointly convex. Hence we can deduce (a) - (c) from \eqref{eq:equivalent form of equality in DPI}, which is equivalent to \eqref{eq:equality in DPI} in (1), and Lemma \ref{lem:convexity under min/max} as we did in \textbf{Case 1}. 
    	
    	Again as in \textbf{Case 1}, we can use the same argument to prove (a) and (c) when $p=1$ and  $0<-q<1$, and prove (a) and (b) when $1<p\le 2$ and $-q=1$.
    \end{proof}

\subsection*{Acknowledgements.} The research was supported by the European Union's Horizon 2020 research and innovation programme under the Marie Sk\l odowska-Curie grant agreement No. 754411. The author would like to thank Anna Vershynina and Sarah Chehade for their helpful comments.


\newcommand{\etalchar}[1]{$^{#1}$}

\end{document}